\newcommand{\ignore}[1]{}
\newtheorem{theorem}{Theorem}[section]
\newtheorem{lemma}{Lemma}[section]
\title{Distributed Server Allocation for Content Delivery Networks}
\author{Sarath Pattathil, Vivek S.\ Borkar and  Gaurav S.\ Kasbekar		
\thanks{Sarath Pattathil, Vivek S.\ Borkar and Gaurav Kasbekar are with the Department of Electrical Engineering, Indian Institute of Technology Bombay, Powai, Mumbai 400076, India. Email: sarathpattathil@iitb.ac.in, borkar.vs@gmail.com, gskasbekar@ee.iitb.ac.in. The work of VSB was supported in part by a J.\ C.\ Bose Fellowship and a grant for `Approximation of High Dimensional Optimization and Control Problems' from the Department of Science and Technology, Government of India.}% <-this % stops a space
        }% <-this % stops a space
\date{}
\begin{document}

\maketitle

\begin{abstract}
We propose a dynamic formulation of file-sharing networks in terms of an average cost Markov decision process with constraints. By analyzing a Whittle-like relaxation thereof, we propose an index policy in the spirit of Whittle and compare it by simulations with other natural heuristics.
\end{abstract}

% Note that keywords are not normally used for peerreview papers.
%\begin{IEEEkeywords}
%Resource Pooling, Distributed control, Whittle index
%\end{IEEEkeywords}

%\abstract{We propose a dynamic formulation of file-sharing networks in terms of an average cost Markov decision process with constraints. By analyzing a Whittle-like relaxation thereof, we propose an index policy in the spirit of Whittle and compare it by simulations with other natural heuristics.}

%\abstract{We propose a dynamic formulation of file-sharing networks in terms of an average cost Markov decision process with constraints. By analyzing a Whittle-like relaxation thereof, we propose an index policy in the spirit of Whittle and compare it by simulations with other natural heuristics.}

\section{Introduction}
Recently, Content Delivery Networks (CDNs), which distribute content (\emph{e.g.}, video and audio files, webpages) using a network of server clusters situated at multiple geographically distributed locations, have been extensively deployed in the Internet by content providers themselves (\emph{e.g.}, Google) as well as by third-party CDNs that distribute content on behalf of multiple content providers (\emph{e.g.}, Akamai's CDN distributes Netflix and Hulu content)~\cite{RF:Kurose:Ross},~\cite{RF:leighton:CDNs}. The delay incurred in downloading content to an end user is often significantly lower when a CDN is used compared to the case where all content is downloaded from a single centralized host, since the server clusters of a CDN are located close to end users~\cite{RF:Kurose:Ross},~\cite{RF:leighton:CDNs}.

In this paper, we consider a server cluster which contains $M \geq 2$ servers and is part of a CDN. The server cluster stores $N$ large file types (\emph{e.g.}, videos). There is a high demand for each file type and therefore each file type is replicated across multiple servers within the cluster. Each file type is characterized by the average size of the file it stores. We do not maintain the identity of each individual file for every file type, but instead assume that the size of each file from any particular file type comes from a distribution. From now on, we refer to the file types as  files for sake of brevity. Requests for the $N$ files from end users or from smaller server clusters arrive at the server cluster from time to time. There are two approaches to serving the file requests~\cite{Virag,Virag2}:
\begin{enumerate}
\item   \emph{Single Server Allocation}: Each file request is served by a single server~\cite{Virag,Virag2}.

\item  \emph{Resource Pooling}: Each file request is simultaneously served by multiple servers, in particular, different chunks of the file are served by different servers in parallel~\cite{Virag,Virag2}.

\end{enumerate}
Resource pooling has been found to outperform single server allocation in prior studies~\cite{Virag,Virag1,Virag2} and hence in this paper  we assume that resource pooling is used. Also, we allow multiple files to be simultaneously downloaded from a given server. At any time instant, the sum of the rates at which a server $j$ transmits different files is constrained to be at most $\mu_j$. Requests for different files are stored in different queues, and there is a cost for storing a request in a queue. Let $\xi^{ij}(t)$ be the rate at which server $j$ transmits file $i$ at time $t$. We consider the problem of determining the rates  $\xi^{ij}(t)$ for each $i$, $j$ and $t$ so as to minimize the total storage cost. We formulate this problem as a Markov Decision Process (MDP)~\cite{Hern}. We show that this problem is Whittle-like indexable~\cite{Whittle} and use this result to propose a Whittle-like scheme~\cite{Whittle} that can be implemented in a distributed manner\footnote{We use the phrase `Whittle-like' instead of just Whittle because the scheme introduced in this paper, although in the same spirit of Whittle's original paper, is not exactly the same.}. We evaluate the performance of our scheme using simulations and show that it outperforms several natural heuristics for the problem such as Balanced Fair Allocation, Uniform Allocation, Weighted Allocation, Random Allocation and Max-Weight Allocation.

We now review related prior literature. In \cite{Virag1}, performance of Content Delivery Networks is evaluated in  a static framework. This work also studies the  tradeoffs  between delay for each packet vs the  energy used etc. The polymatroid structure of the service capacity in this model is exploited to get an expression for mean file transfer delay that is experienced by incoming file requests. Performance of dynamic server allocation strategies, such as random server allocation or allocation of least loaded server, are also explored. We use the model  of \cite{Virag1} for CDN, but go a step further by looking at a fully dynamic optimization problem as an MDP.

In \cite{Virag2}, a centralized content delivery system with collocated servers is studied. Files are replicated in these servers and these serve as a pooled resource which cater to file requests. The article shows how dynamic server capacity allocation outperforms simple load balancing strategies such as those which assign the least loaded server, or assign the servers at random. The article also goes on to study file placement strategies that improve the utility of the system.

Several works including \cite{Laconte, Laconte2, Sharayu, Zhou} look at large-scale content delivery networks, focusing on placement of content in the servers. Of these, \cite{Laconte} also studies the greedy method of server allocation and its efficiency under various regimes of server storage capacities, and under what content placement strategy it would be efficient. Article \cite{Zhou} studies strategies for scheduling after the content placement stage, and proposes an algorithm, called the Fair Sharing with Bounded Degree (FSBD), for server allocation.

In \cite{Virag3}, multiclass queuing systems are studied with different arrival rates. The service rates are constrained to be in a symmetric polymatroid region. Large scale systems with a growing number of service classes are studied and several asymptotic results regarding fairness and mean delays are obtained.

Multi-server models are studied in \cite{JohnT} with each server connected to multiple file types and each file type stored in multiple servers, thereby creating a bipartite graph. This article focuses on the scaling regime where the number of servers goes to infinity. It is shown that even if the average degree $d_n << n :=$ the number of servers, an asymptotically vanishing queuing delay can be obtained. These results are based on a centralized scheduling strategy.

In \cite{Bonald}, multi-server queues are studied with an arbitrary compatibility graph between jobs and servers. The paper designs a scheduling algorithm which achieves balanced fair sharing of the servers. Several parameters are analyzed using this policy by drawing a parallel between the state of the system at any time to that of a Whittle network.

However, none of the above papers~\cite{Bonald, Laconte, Laconte2, Sharayu, Virag2, Virag3, JohnT,  Zhou} show Whittle indexability of the respective resource allocation problems they address. The work closest in spirit to ours is  \cite{Larranga}, which studies a Whittle indexability scheme for birth and death restless bandits. These  model server allocation to queues, but it does not study the case when there are multiple servers storing the same file types as is the case in general content delivery networks. In the present work we take an alternative approach which considers a dynamic optimization or control problem that can be interpreted as a problem of scheduling  restless bandits. We analyze it in the framework laid down by Whittle for deriving a heuristic index policy \cite{Whittle}. To the best of our knowledge, this paper is the first to show Whittle-like indexability of the server allocation problem in the setting of a CDN server cluster that uses resource pooling, with the objective of minimization of the total file request storage cost. The fact that this problem is Whittle-like indexable allows us to decouple the original average cost MDP, which is difficult to solve directly, into more tractable separate control problems for individual file types. The decoupling leads to an efficient algorithm based on computation of Whittle-like indices, which outperforms several natural heuristics for the problem. Our proof techniques broadly follow the general scheme of \cite{Agarwal}, albeit with some differences.

 The Whittle index heuristic has been successfully applied to various resource allocation problems including: crawling for ephemeral content \cite{ephemeral}, congestion control \cite{ABSCDC}, UAV routing \cite{Ny},  sensor scheduling \cite{Nino},  routing in clusters \cite{Nino2}, opportunistic scheduling \cite{Opportunistic}, inventory routing \cite{Glaze4}, cloud computing \cite{Cloud} etc. General applications to resource allocation problems can be found in \cite{Larranga}. Book length treatments of restless bandits can be found in \cite{Jacko} and \cite{Ruiz}.

The rest of the paper is structured as follows:
In section \ref{sec:model_prob}, we discuss our model  and formulate the problem as a Markov Decision Process (MDP). Section \ref{sec:struc_prop} shows various structural properties of the value function of the MDP formulated in section \ref{sec:model_prob}. In section \ref{sec:whittle}, we prove that the problem of server allocation in the resource pooling setting is in fact  indexable and provide a scheme to compute this index. Section \ref{sec:simulations} discusses other heuristics for server allocation and presents numerical comparisons of the proposed index policy with other heuristics. We conclude the paper with a brief discussion  in Section \ref{sec:conc}.

We conclude this section with a brief introduction to the Whittle index \cite{Whittle}. Let $X^i(t), \ t \geq 0, 1 \leq i \leq N$, be $N$ Markov chains, each with two modes of operations: active and passive, with associated transition kernels $p_1( \cdot | \cdot ), p_0( \cdot | \cdot )$ resp. Let $r_1^i(X^i(t)), r_0^i(X^i(t))$ be instantaneous rewards for the $i^{th}$ user in the respective modes with $r^i_1( \cdot ) \geq r^i_0( \cdot ).$
The goal is to schedule active/passive modes so as to maximize the total expected average reward
$$\lim_{T \rightarrow \infty} \frac{1}{T} \sum_{t=0}^{T-1}\sum_j E[r^j_{\nu^j(t)}(X^j(t))]$$
where $\nu^j(t) = 1$ if $j$th process is active at time $t$ and $0$ if not, under the constraint $\sum_j\nu^j(t) \leq M  \ \forall t$, i.e., at most $M$ processes are active at each time instant. This hard constraint makes the problem difficult to solve (see \cite{Papa}). So following Whittle, one relaxes the constraint to
$$\lim_{T \rightarrow \infty} \frac{1}{T} \sum_{t=0}^{T-1} \sum_j E[\nu^j(t)] \leq M.$$
 This makes it a problem with separable cost and constraints which, given the Lagrange multiplier $\lambda$, decouples into $N$ individual problems with reward for passivity changed to $\lambda + r_0(\cdot)$.  The problem is Whittle indexable if under optimal policy, the set of passive states increases \textit{monotonically} from empty set to full state space as $\lambda$ varies from $-\infty$ to $+\infty$. If so, the Whittle index for a given state can be defined as the value of $\lambda$ for which both modes (active and passive) are equally desirable. The index policy is then to compute these for the current state profile, sort them in decreasing order, and render active the top $M$ processes, the rest passive. The decoupling  implies $O(N)$ growth of state space as opposed to the original problem, for which it is exponential in $N$. Further, the processes are coupled only through a simple index policy. The latter is known to be asymptotically optimal as $N\uparrow\infty$ \cite{Weber}. However, no convenient general analytic bound on optimality gap seems available.

\section{Model and problem formulation}
\label{sec:model_prob}

Consider a server cluster that contains multiple servers, each of which stores one or more files. We represent this system using a  bipartite graph\footnote{Recall that a graph $G = (V,E)$ is said to be \emph{bipartite} if its node set $V$ can be partitioned into two sets $F$ and $S$ such that every edge in $E$ is between a node in $F$ and a node in $S$~\cite{RF:west:graph:theory}.} $G = (F \cup S;E)$ where $F$ is a set of $N$ files,
$S$ is a set of $M$ servers, $E$ is the set of edges, and each edge $e \in E$ connecting a file $i \in F$ and server $j \in S$ implies that a copy of file $i$ is replicated at server $j$ (see Figure~\ref{fig:Graphical_Model}).

\begin{figure}[h!]
\begin{center}
\includegraphics[angle=0,scale=0.6]{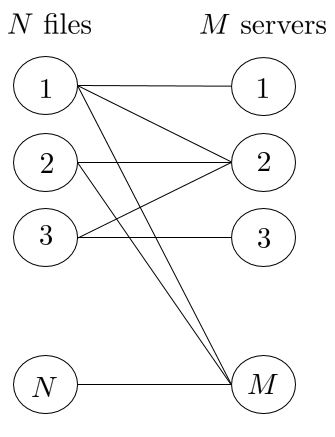}
\caption{The model used in this paper. A link between File $i$, and Server $j$ denotes that a copy of File $i$ is stored in Server $j$.}
\label{fig:Graphical_Model}
\end{center}
\end{figure}

For $j \in S$, $F_j$ denotes the set of files that are stored in server $j$. Similarly, for $i \in F$, $S_i$ denotes the set of servers that store file $i$.
Requests for file $i \in F$ arrive to the server cluster according to an independent Poisson process with rate $\Lambda^i$ and are queued in a separate queue for each file type. We assume that the job (requested file) sizes have an exponential distribution. (For sake of simplicity, we assume their means to be identically equal to one. More general cases can be handled by suitable scaling of the $\xi^{ij}(\cdot)$'s defined below.)
Let $\xi^{ij}(t)$ denote the rate at which server $j$ transmits file type $i$ at time $t$. Then the capacity constraint at each server can be expressed as:
\begin{align}
\label{eq:constraint}
\sum_{i \in F_j} \xi^{ij}(t) \ \le \ \mu_j \qquad \forall t \geq 0, \ j \in \{ 1,2,\cdots, M\},
\end{align}
where $\mu_j$ is the maximum permissible rate of transmission from server $j$.

Let $f^i(x)$ be the  cost for storing $x$ jobs in the queue $i$. We assume $f^i(\cdot)$ to be an increasing strictly convex function (see \cite{Bertsekas}) for $i = 1,2,\cdots, N$. (We comment on the strict convexity assumption at the end of Section \ref{sec:struc_prop}). Our aim is to minimize the long run average cost, given by
\begin{align}
\limsup_{T \rightarrow \infty}  \frac{1}{T} \int_{0}^{T} \mathbb{E}[\: \sum_{i=1}^{N} f^i(X^i(t))]dt, \nonumber
\end{align}
where $X^i(t)$ is the length of queue $i$ at time $t$.

This makes it a \textit{continuous time Markov decision process} with the state process given by $\hat{X}(t) = [X^1(t), \cdots, X^N(t)], t \geq 0$, taking values in the state space $\textbf{S}^N$ where $\textbf{S} := \{0,1,2,\cdots\}$ with control process $\xi(t) := \{\xi^{ij}(t)\}_{i \in F, j \in S_i}, t \geq 0$, taking values in the compact control space
$\textbf{U} := \{u^{ij}, i \in F, j \in S_i : \sum_{i \in F_j}u^{ij} \leq \mu_j \ \forall j\}.$ We shall consider as admissible control policies the $\{\xi^{ij}(\cdot)\}$ whereby one has the controlled Markov property, i.e., for $t \geq 0, \delta > 0$,
$$P(\hat{X}(t + \delta) = y | \hat{X}(s), \xi(s), s \leq t) = \hat{q}(y|x(t), \xi(t))\delta + o(\delta)$$
for a `controlled rate matrix' $q = [[q(y | x, u)]], x, y \in \textbf{S}^N, u \in \textbf{U}$. A special case is that of the control policies wherein $\xi(t)$ is adapted to $\hat{X}(s), s \leq t$, for all $t \geq 0$.  As usual, one has the important special subclasses of control policies, viz., stationary deterministic policy wherein $\xi(t) = v(\hat{X}(t))$ for a prescribed $v(\cdot) : \textbf{S}^N \mapsto \textbf{U}$, and stationary randomized policy wherein the conditional law of $\xi(t)$ given $\hat{X}(s), s \leq t,$, depends on $\hat{X}(t)$ alone.\\

\noindent \textbf{Stability Assumption:} We assume there exists a stationary randomized policy under which the cost is finite (which in particular implies that the policy is stable in the sense that the corresponding Markov chain $\hat{X}(\cdot)$ is positive recurrent), and, in addition,
\begin{equation}
\sum_{j \in S_i}\mu_j > \Lambda_i \ \forall i. \label{stability-local}
\end{equation}

The stability assumption above ensures the existence of at least one stationary randomized policy under which the process is stable. Our assumption on the $f^i$'s  implies that $\lim_{x \uparrow\infty}f^i(x) = \infty \ \forall i$, implying in turn that the cost is \textit{near monotone} \cite{Borkarsurvey} in the sense that it penalizes high values of the state $\|\hat{X}(t)\|$. In particular, an unstable control policy that leads to transience or null recurrence will lead to  an infinite cost. In fact it is known that for the problem without the additional constraint (\ref{eq:constraint}), an optimal stable stationary deterministic policy that is optimal among all admissible policies, exists under this condition. (See \cite{Borkarsurvey} for the discrete time case, the continuous time situation is completely analogous.) This remains true for the constrained problem we consider next after relaxing (\ref{eq:constraint}) to a weaker `average' constraint in the spirit of \cite{Whittle} below, if we replace `stationary deterministic policy' by `stationary randomized policy' in the above (again, see \cite{Borkarsurvey} for the discrete time case, the continuous time situation being completely analogous.) We shall not need the latter result, because the Whittle-like index policy we propose is in fact a stationary deterministic policy \textit{sans} any randomization.

Note that $X^i(\cdot), 1 \leq i \leq N$, are in fact individual controlled Markov chains coupled through their controls that have to satisfy the $M$ constraints (\ref{eq:constraint}) that couples them. This forces us to view the combined process $\hat{X}(\cdot)$ as a single controlled Markov chain. The Whittle device we use below allows us to undo this for purposes of analysis via a clever heuristic. Specifically, the controlled rate matrix $Q^i(t), t \geq 0,$ of $X^i(\cdot)$ is  given by: for $z > 0$,
\begin{align*}
Q^i(z+1|z,\xi^{ij}(t), \:  j \in S_i) &= \Lambda^i, \\
Q^i(z-1|z,\xi^{ij}(t), \:  j \in S_i) &= \sum_{j \in S_i}\xi^{ij}(t),\\
Q^i(z|z,\xi^{ij}(t), \:  j \in S_i) &= - \big( \Lambda^i + \sum_{j \in S_i}\xi^{ij}(t) \big),
\end{align*}
for $z = 0$,
\begin{align*}
Q^i(1|0,\xi^{ij}(t), \:  j \in S_i) &= \Lambda^i, \\
Q^i(0|0,\xi^{ij}(t), \:  j \in S_i) &= -\Lambda^i.
\end{align*}
Following the classic paper of Whittle \cite{Whittle}, we relax the $M$ per stage constraints (\ref{eq:constraint}) to $M$ averaged constraint
\begin{align}
\label{eq:relaxed_constraint}
\limsup_{T \rightarrow \infty}  \frac{1}{T} \int_{0}^{T} \sum_{i \in F_j} \mathbb{E}[ \xi^{ij}(t)] \: \leq \mu_j \quad \forall  j \in \{ 1,2,\cdots, M\}
\end{align}
where we assume that $0 \leq \xi^{ij}(t) \leq \mu_j \ \forall \ i,j,t$. Specifically, we have replaced the $M$ hard constraints (\ref{eq:constraint}) that apply at \textit{each} time instant by $M$ \textit{average} constraints which allow the violation of (\ref{eq:constraint}) from time to time, but requires it to hold only in an average sense. In particular, the left hand side of (\ref{eq:relaxed_constraint}) can be viewed as another average cost functional. This makes it a classical  constrained Markov decision process \cite{Borkarsurvey}. This has an equivalent formulation as a linear program on the space of measures, in terms of the so called \textit{ergodic occupation measures} \cite{Borkarsurvey}. These measures are defined as probability measures on the product space \textbf{$S^N\times U$} that are of the form
$$\Phi(dx, du) = \Phi_0(dx)\Phi_1(du | x)$$
where $\Phi_0$ is the marginal on \textbf{$S^N$} which is required to be the stationary distribution of the Markov chain controlled by $\Phi_1(du | x)$, the regular conditional law in the above decomposition interpreted as a stationary randomized policy. The control problem can then be identified with the problem of minimizing the integral of the running cost $\hat{f}(\cdot, \cdots, \cdot) := \sum_if^i(\cdot)$ w.r.t.\ this measure, a linear functional thereof, over the set of all ergodic occupation measures which turns out to be a closed convex set characterized by a set of linear equalities and inequalities. Specifically, one has:\\

\noindent Minimize $\int \hat{f}d\Phi(dx, \textbf{U})$\\

\noindent subject to:  $\Phi \geq 0, \ \Phi(\textbf{S}^N\times\textbf{U}) = 1,$\\

\noindent $\int\Phi(dx,du)\prod_iQ^i(y^i | x^i, u^{ij}, j \in S_i) = 0.$\\

See \cite{Borkarsurvey} for details. This facilitates the use of standard tools of abstract convex optimization in this context. While we do not need the details thereof here, we do require one consequence of it, viz., that it allows one to consider an equivalent   unconstrained  average cost problem with cost
\begin{align}
 \lim_{T \uparrow \infty} \frac{1}{T} \int_{0}^{T} \mathbb{E}\left[ \: \sum_{i=1}^{N} f^i(X^i(t)) + \sum_{j} \hat{\lambda}_j (\sum_{i \in F_j} \xi^{ij}(t) - \mu_j )\right]dt, \nonumber
\end{align}
where $\hat{\lambda}_j \geq 0$ is the Lagrange multiplier associated with the $j^{th}$ relaxed constraint
$$ \lim_{T \uparrow \infty}  \frac{1}{T} \int_{0}^{T} \mathbb{E}[\sum_{i} \xi^{ij}(t)] \leq \mu_j. $$
(We replace the conventional `$limsup_{T\uparrow\infty}$' in analysis of average cost control by `$lim_{T\uparrow\infty}$' by exploiting the fact that the results of \cite{Borkarsurvey} allow us to restrict to stationary randomized policies for which the $limsup_{T\uparrow\infty}$ above is in fact the $lim_{T\uparrow\infty}$.) Since the cost is now separable in $X^i(\cdot)$'s, given the values of the Lagrange multipliers $\hat{\lambda}_j$, this optimization problem decouples into separate control problems for individual processes $X^i(\cdot)$, with the cost function for the $i$th process (file type) being given by:
\begin{align}
c^i(x^i, \hat{\lambda}) = f^i(x^i) + \sum_{j \in S_i}\hat{\lambda}_j (\xi^{ij}(t) - \mu_j ). \nonumber
\end{align}
where $\hat{\lambda} = [\hat{\lambda}_1, \hat{\lambda}_2,  \cdots \hat{\lambda}_M]$ is a vector containing all $\hat{\lambda}_j's$. The average cost dynamic programming (DP) equation for this  MDP for file type $i$ is given by \cite{Hern}:
\begin{align}
\min_{\mu_j \geq u^{ij} \geq 0, j \in S_i}  \bigg( c^i(x, & \hat{\lambda})  - \beta^i \nonumber \\
&+ \sum_y V_{\hat{\lambda}}^i(y) Q^i(y|x,{\xi}^{ij}, \:  j \in S_i) \bigg) = 0
\end{align}
%\begin{eqnarray*}
%\lefteqn{\min_{\mu_j \geq u^{ij} \geq 0, j \in S_i} \bigg( c^i(x, \hat{\lambda}) - \beta^i + } \\
%&&\sum_y V_{\hat{\lambda}}^i(y) Q^i(y|x,{\xi}^{ij}, \:  j \in S_i) \bigg) = 0
%\end{eqnarray*}
where:
\begin{itemize}

\item  $\beta^i$ is the optimal cost for file type $i$,

\item $V_{\hat{\lambda}}^i(\cdot)$ is the value function (sometimes called the `relative value function').
\end{itemize}
In what follows, we drop the dependence of $V_{\hat{\lambda}}^i(\cdot)$ on $i$ and $\hat{\lambda}$ for sake of notational simplicity and bring it back only when needed for the analysis. Substituting the values of $Q^i$ back in the DP equation and dropping the superscript $i$ (except from ${\xi}^{ij}$) for ease of notation, we have\footnote{Note that when the queue of file type $i$ is empty, no server needs to provide any service to that particular file type.}: for $x > 0$,
\begin{align}
\min_{\mu_j \geq {\xi}^{ij} \geq 0, j \in S_i} \bigg( c(x, \hat{\lambda}) - \beta + V & (x+ 1) \Lambda  + V(x-1) \sum_{j \in S_i}{\xi}^{ij} \nonumber \\
&- V(x)\big( \Lambda + \sum_{j \in S_i}{\xi}^{ij} \big)  \bigg) = 0,
\end{align}
equivalently,
\begin{align}
\label{eq:CTMC_MDP}
 & \min_{\mu_j \geq u^{ij}  \geq 0, j \in S_i}  \bigg( f(x) + \sum_{j \in S_i}\hat{\lambda}_j ({\xi}^{ij} - \mu_j ) - \beta + V(x+1) \Lambda \nonumber \\
&+ V(x-1) \sum_{j \in S_i}{\xi}^{ij}(t) - V(x)\big( \Lambda + \sum_{j \in S_i}{\xi}^{ij} \big)  \bigg) = 0.
\end{align}
Adding  $V(x)$ on both sides of equation (\ref{eq:CTMC_MDP}) we get:
\begin{eqnarray}
V(x) &=&  \min_{\mu_j \geq u^{ij} \geq 0, j \in S_i}  \bigg( f(x) + \sum_{j \in S_i}\hat{\lambda}_j ({\xi}^{ij} - \mu_j) - \beta \nonumber \\
&& \qquad + \ V(x+1) \Lambda^i
+ V(x-1) \sum_{j \in S_i}{\xi}^{ij}  \nonumber \\
&& \qquad \qquad + \ V(x)\big(1 - \big( \Lambda^i + \sum_{j \in S_i}{\xi}^{ij} \big) \big)  \bigg) \label{eq:Original_DP}
\end{eqnarray}
The equations for $x = 0$ can be written in a simiar fashion with appropriate modifications.

We now adapt the idea of uniformization to pass from a continuous time Markov chain to a discrete time Markov chain. If we scale all transition rates by a fixed multiplicative factor, it is tantamount to time scaling which will scale the average cost, but not affect the optimal policy. Hence without loss of generality, we can assume that the arrival and service rates are such that the coefficients of $V(\cdot)$ that appear in the right hand side of equation (\ref{eq:Original_DP}) are between some $\epsilon > 0$ and $1$ and  can be interpreted as transition probabilities of a discrete time controlled Markov chain. Thus (\ref{eq:Original_DP}) is a dynamic programming equation for a discrete time Markov decision process with average cost. Note that the equation at best specifies $V$ only up to an additive scalar, so for its well-posedness, in the least we need to add a qualification such as (say) $V(0) = 0$. We shall make this choice (which is by no means unique) and stay with it. See \cite{Borkarbook}, Chapter VI, (in particular, Theorem 4.1, p.\ 87) for a complete treatment of well-posedness of (\ref{eq:Original_DP}).  One only needs to verify the assumption therein of `stability under local perturbation' which states that a stable stationary deterministic policy remains so if we change the control choice at exactly one state. This is immediate if each state has at most finitely many successors, as is the case here - see Lemma 1.1, p.\ 71, of \cite{Borkarbook}. We take the foregoing as given, suffice to say that the near-monotonicity of the cost and existence of a stable stationary randomized policy with finite cost  by virtue of the `Stability Assumption' above play a crucial role in establishing the DP equation.

As we are working with a fixed $i$, the control space is $U^i := \prod_{j \in S_i}[0, \mu_j]$ and a stationary deterministic policy corresponds to $\xi^{ij}(n) = \varphi(X(n))$ for a measurable $\varphi: \textbf{S} \mapsto U^i$, where $X(\cdot)$ is the corresponding controlled Markov chain, now in discrete time (We drop the superscript $i$ for notational convenience.). We shall identify this policy with the map $\varphi$ by a standard abuse of notation.

The  expression which is to be minimized on the right hand side of (\ref{eq:Original_DP})  is linear in ${\xi}^{ij}, j \in S_i$ and each ${\xi}^{ij}$ has the capacity constraint which restricts the values of ${\xi}^{ij}$ to be $\leq \mu_j$, i.e., ${\xi}^{ij} \in [0, \mu_j]$. This, combined with the fact that the objective is linear, ensures that the minimum is attained at a corner where each server is either serving at full capacity or at zero capacity, i.e., at ${\xi}^{ij} = 0$ or ${\xi}^{ij} = \mu_j$ for all $j \in S_i$. Define $u_{ij} = 1$ if ${\xi}^{ij} = \mu_j$ and $u_{ij} = 0$ if ${\xi}^{ij} = 0$.

This achieves the first simplification in Whittle's program, viz., to decouple the original hard problem into $N$ simpler problems. But unlike in the original Whittle case, where the decision was binary between active and passive modes, we have multiple decision variables, ${\xi}^{ij}$ for each $i$. The foregoing shows that each one separately entails a binary decision between $0$ and $\mu_j$ resp. Our approach to arriving at a Whittle-like policy is the most common one, viz., to first show the existence of an optimal threshold policy and then establish the monotonicity of the threshold in the Lagrange multiplier. Even the notion of a threshold does not make sense in a control space without a natural order, thus we need to reduce the problem to a situation where such is the case. This suggests that we apply the Whittle philosophy separately to each control variable in isolation, keeping the rest fixed at their respective capacities $\mu_{[\cdot]}$. We make this the basis for coming up with a Whittle-like index policy. Like the original Whittle scheme, this too is a heuristic, which we later compare with other natural heuristics empirically and find that it performs quite well in comparison. Our motivation for this specific choice and no other is as follows. In principle, we could fix any values of all but one control variable in order to reduce it to a single control variable case, but fixing the rest at maximum rate, which aids stability, puts the least onus on the flagged control variable vis-a-vis stability. To amplify this point, consider, e.g., the other extreme where we fix all other rates to zero. Then to ensure the existence of at least one stable stationary randomized policy for the decoupled problem, we would need a stronger restriction than the above `Stability Assumption'. Observe in particular that we are now considering separate control problems associated not only for each  process $i$ separately, but for separate \textit{pairs} of process $X^i(\cdot)$ and control $\xi^{ik}(\cdot)$ for a prescribed $k$, having fixed $\xi^{ij}(\cdot) \equiv \mu_j \ \forall \ j \neq k$. The sole variable being manipulated now takes values in an ordered set $[0, \mu_k]$ which facilitates search for an optimal threshold policy.

This also has the added bonus that all but one Lagrange multiplier drop out of each such DP equation, facilitating later the definition of Whittle-like index that would otherwise be quite messy.

We emphasize again that this is a heuristic policy just like the original Whittle case and need not be optimal. An optimal policy for the exact coupled problem will face the curse of dimansionality in a major way. To see this, suppose we use finite buffers of a constant size for each queue as an approximation and assume $|S_i|, |F_j|$ are independent of $i,j$ resp., denoted simply as $|S|, |F|$ resp.  The state space for the original problem is the product of individual state spaces of the queues, which grows exponentially in $|S|$. In contrast, after decoupling the problem using Lagrange multipliers, it grows linearly in $|S|$.  This is exactly the same problem which motivates the original Whittle index.

Since all other servers are serving at full rate, we have that
\begin{align}
\xi^{ij}(t) = \mu_j \qquad \forall \: j \in S_i, j \neq k, \ \forall t. \text{ s.t.  } X^i(t) > 0. \nonumber
\end{align}
Let $\lambda_k = -\hat{\lambda}_k \mu_k$. We interpret  $\lambda_k$ as the marginal disutility of allowing server $k$ to serve at $0$ rate when all other servers containing the file type are already serving at their full capacity. This disutility plays the role of `subsidy' in the original Whittle formulation which dealt with a reward maximization problem instead of cost minimization. On substituting ${\xi}^{ij} = \mu_j  \  \forall \: j \in S_i, j \neq k$, we have:
\begin{align}
\label{eq:Simple_DP}
V(x) = f(x) - \beta + \min \bigg( & \lambda_k + \sum p_1(y|x)V(y), \nonumber \\
& \qquad \qquad \quad \sum p_2(y|x)V(y)  \bigg).
\end{align}
Here $p_1( \cdot | \cdot )$ is the transition probability when the server does not serve this file type and is given by (for $x > 0$):
\begin{align}
\label{eq:def_passive_prob}
p_1(x+1|x) &= \Lambda, \nonumber \\
p_1(x|x) &= 1 - \big( \Lambda + \sum_{j \in S_i, j \neq k} \mu_j \big), \nonumber \\
p_1(x-1|x) &= \sum_{j \in S_i, j \neq k} \mu_j,
\end{align}
and $p_2( \cdot | \cdot )$ is the transition probability when the server serves this file type and is given by (for $x > 0$):
\begin{align}
\label{eq:def_active_prob}
p_2(x+1|x) &= \Lambda, \nonumber \\
p_2(x|x) &= 1 - \big( \Lambda + \sum_{j \in S_i} \mu_j \big), \nonumber \\
p_2(x-1|x) &= \sum_{j \in S_i} \mu_j.
\end{align}
For $x=0$, the transition probabilities $p_1( \cdot | \cdot )$ and $p_2( \cdot | \cdot )$ are the same and are given by (for $i = 1,2$):
\begin{align}
p_i(1|0) &= \Lambda \nonumber \\
p_i(0|0) &= 1 - \Lambda \nonumber
\end{align}

In the next section, we prove some structural properties of the value function.

\section{Structural Properties of the Value function}
\label{sec:struc_prop}

This section closely follows in spirit the approach of  \cite{Agarwal}, \cite{Borkarbook}, \cite{Cloud} and \cite{proc_sharing}, but with significantly different proofs.

\begin{lemma}
$V(\cdot)$ is non-decreasing in the number of files.
\label{lemma_inc_file}
\end{lemma}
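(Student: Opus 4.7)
The plan is to establish the monotonicity by induction on the value-iteration iterates and then pass to the limit. Let $T$ denote the Bellman operator on the right-hand side of (\ref{eq:Simple_DP}), namely, for $x > 0$,
$$(Tg)(x) = f(x) - \beta + \min\!\left(\lambda_k + \sum_y p_1(y|x)\, g(y),\ \sum_y p_2(y|x)\, g(y)\right),$$
with the analogous expression at $x = 0$ (where the two arguments of the $\min$ coincide because the two kernels agree). By the well-posedness results of \cite{Borkarbook} invoked just after (\ref{eq:Original_DP}), the iterates $V_n := T^n V_0$ starting from $V_0 \equiv 0$ converge pointwise to the relative value function $V$ (either directly, or via the standard vanishing-discount route applied to the discounted analog). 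Hence it suffices to show that each $V_n$ is non-decreasing on $\textbf{S}$.

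The key technical ingredient is the stochastic monotonicity of the two transition kernels: for each $i \in \{1,2\}$, the law $p_i(\cdot \mid x)$ is stochastically non-decreasing in $x$. Indeed, for $x \geq 1$ the kernel $p_i(\cdot \mid x)$ is the translate of $p_i(\cdot \mid x-1)$ by $+1$, with identical weights on $\{x-1, x, x+1\}$ versus $\{x-2, x-1, x\}$, so stochastic dominance is immediate. For the boundary comparison of $p_i(\cdot \mid 1)$ with $p(\cdot \mid 0) = (1-\Lambda)\delta_0 + \Lambda\,\delta_1$, dominance reduces to the inequality $\Lambda + \sum_{j \in S_i,\, j \neq k} \mu_j \leq 1$, which holds by the uniformization scaling adopted before (\ref{eq:Original_DP}). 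A standard Abel-summation argument then shows that whenever $g$ is non-decreasing, so is $x \mapsto \sum_y p_i(y \mid x)\, g(y)$.

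Granting this, the inductive step is routine. Assume $V_n$ is non-decreasing; then both arguments of the $\min$ defining $(TV_n)(x)$ are non-decreasing in $x$, the pointwise minimum of two non-decreasing functions remains non-decreasing, and adding $f(x) - \beta$ (with $f$ non-decreasing by the strict convexity and boundary hypothesis) preserves this property. Hence $V_{n+1} = TV_n$ is non-decreasing, completing the induction. Taking $n \uparrow \infty$ yields the monotonicity of $V$.

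The main obstacle I anticipate is purely boundary bookkeeping at $x = 0$, where the active and passive actions collapse to a common kernel of smaller support and where the pinning $V(0) = 0$ is imposed. The stochastic dominance across this boundary hinges on the normalization guaranteed by uniformization; once this is verified the induction carries through mechanically, so the lemma should drop out without further analytic effort.
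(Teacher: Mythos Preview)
Your argument is correct and takes a genuinely different route from the paper. The paper establishes the lemma by a \emph{pathwise coupling}: it launches two copies of the chain from $x < x'$ driven by the same noise and by the stationary control that is optimal for the upper chain $X'(\cdot)$; since $X'$ is positive recurrent, the lower chain $X$ repeatedly gets stuck at $0$ while $X'$ experiences departures, so the gap $X'(\cdot)-X(\cdot)$ decrements to zero at an a.s.\ finite time $\tau$, and iterating the DP inequality yields $V(x')-V(x)\geq E[V(X'(\tau))-V(X(\tau))]=0$. Your route instead exploits the stochastic monotonicity of the birth--death kernels $p_1,p_2$ and propagates the non-decreasing property through the Bellman iteration, which is more elementary and modular (no trajectory analysis, no stopping-time argument) and incidentally the same mechanism the paper uses for convexity in Lemma~\ref{lemma:inc_diff}. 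The coupling buys a direct proof for the average-cost $V$ without any limiting procedure; your approach buys transparency and immediately generalizes to any kernel family that is stochastically ordered in the state. Two cosmetic points: your operator $T$ as written already contains the unknown optimal cost $\beta$, so $T^nV_0$ is not a bona fide value iteration---carry out the induction for the $\alpha$-discounted operator and pass to $V$ via the vanishing-discount limit you mention, exactly as in the proof of Lemma~\ref{lemma:inc_diff}; and for $p_2$ the boundary dominance at $x=1$ needs the slightly stronger inequality $\Lambda+\sum_{j\in S_i}\mu_j\leq 1$ (with $\mu_k$ included), which is precisely what the uniformization scaling guarantees.
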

\begin{proof} (Sketch)
%Let $A := \{ x \in \textbf{S} : f(x) \leq \beta + \sum_j\hat{\lambda}_j\mu_j\}$. Since $\lim_{\|x\|\uparrow\infty}f(x) = \infty$, $A$ is finite. Let $\tau_A := \min\{n \geq 0 : X(n) \in A\}$ denote the first hitting time of $A$. Then from Lemma \ref{Vrepresentation},
%\begin{eqnarray*}
%V(x) &=& E^*\big[\sum_{m=0}^{\tau_A - 1}\big((f(X(m)) - \hat{\lambda}_j(\xi^{ij}(X(m)) \\
%&&- \ \mu_j) - \beta\big) + V(X(\tau_A))| X(0) = i \big].
%\end{eqnarray*}
%The first term in the expression inside the expectation is nonnegative by our choice of $A$ and the second term takes only finitely many possible values, hence is bounded. Thus $V$ is bounded from below.
We use a `pathwise coupling' argument. Consider initial conditions $x < x'$ in $\textbf{S}$ and an optimal, therefore stable (i.e., positive recurrent) stationary deterministic policy $v(\cdot)$. Consider the controlled chains $X(n), X'(n), n \geq 0,$ as follows: We use the standard formulation of a controlled Markov chain as a dynamics driven by control and noise, i.e.,
\begin{eqnarray*}
X(n+1) &=& F(X(n), \xi(n), \zeta(n+1)), \\
X'(n+1) &=& F(X'(n), \xi(n), \zeta(n+1)),
\end{eqnarray*}
with $X(0) = x, X'(0) = x'$, where $\{\xi_n\}$ is the control process, $\{\zeta(n)\}$ is i.i.d.\ noise uniform on $[0,1]$, and $F$ is some measurable map. Note that the map $F$, the driving noise $\{\zeta(n)\}$, and the control sequence $\{\xi(n)\}$ is common across both. It is always possible to replicate the processes in law on a common probability space in this fashion. In addition, we choose $\xi(n) = v(X'(n)) \ \forall n$. This choice is optimal for $X'(\cdot)$, but not for $X(\cdot)$. In particular, $X'(\cdot)$ is a positive recurrent Markov chain and hits state $0$ infinitely often with probability $1$. Each time this happens, $X'(\cdot) - X(\cdot)$ drops by $1$, hence
$$\tau := \min\{n \geq 0 : X'(n) = X(n)\} < \infty \ \mbox{a.s.}.$$
Note that by our construction,
\begin{itemize}

\item we have:
\begin{eqnarray}
X'(m) &>& X(m) \ \forall \ m < \tau, \label{one'} \\
&=& X(m) \ \mbox{for} \ m \geq \tau, \label{two'} 
\end{eqnarray}
and,

\item for $n < \tau$, either $X'(m+1) - X(m+1) = X'(m) - X(m)$ or $X'(m+1) - X(m+1) = X'(m) - X(m) - 1$ and the latter case occurs only if $X(m) = X(m + 1) = 0$ and $X'(m+1) = X'(m) - 1$.
    \end{itemize}
For $x = X(m)$, resp., $X'(m)$, (\ref{eq:Original_DP}) leads to
$$E\left[V(X'(m)) - V(X(m))\right] \geq E\left[V(X'(m+1)) - V(X(m+1))\right]$$
%where we have used (\ref{one'})-(\ref{two'}) and the monotone increase of $f$. 
Iterating, we get for $T \geq 1$,
$$V(x') - V(x) \geq E\left[V(X'(\tau\wedge T)) - V(X(\tau\wedge T))\right].$$
Letting $T\uparrow \infty$ and using Fatou's lemma, we have
$$V(x') - V(x) \geq E\left[V(X'(\tau)) - V(X(\tau))\right] = 0.$$
\end{proof}

\begin{lemma}
\label{lemma:inc_diff}
$V(\cdot)$ is strictly convex, strictly increasing, and has the property of increasing differences, i.e., for $z>0$ and $x>y$.
\begin{align}
V(x+z) - V(x) > V(y+z) - V(y). \nonumber
\end{align}
\end{lemma}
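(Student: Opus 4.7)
My plan is to establish the three properties in stages: strict monotonicity by a refinement of the coupling argument of Lemma~\ref{lemma_inc_file}, strict convexity by induction along value iteration combined with a piecewise analysis at the threshold of the minimum in the DP equation, and the increasing-differences inequality as a consequence of strict convexity by telescoping. The crucial inputs are the strict convexity and strict monotonicity of $f$, together with the birth--death structure of the controlled chain and the translation-invariance of the kernels $p_1, p_2$ for $x \geq 1$.

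For strict monotonicity, I would rerun the coupling of Lemma~\ref{lemma_inc_file} but retain the accumulated per-stage cost difference rather than discarding it. Under the same coupling of $X(\cdot)$ and $X'(\cdot)$ started at $x < x'$ and driven by the stationary policy $v(\cdot)$ optimal for $X'(\cdot)$, the DP identity for $X'$ and the DP inequality for $X$ telescope to
\begin{align*}
V(x') - V(x) \;\geq\; \mathbb{E}\!\left[\sum_{m=0}^{\tau-1}\bigl(f(X'(m)) - f(X(m))\bigr)\right] + \mathbb{E}[V(X'(\tau)) - V(X(\tau))].
\end{align*}
The second term vanishes by the definition of $\tau$, while the first is strictly positive: the $m=0$ term alone equals the deterministic quantity $f(x') - f(x) > 0$, since $f$ is strictly increasing.

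For strict convexity, I would argue by induction along the value iteration $V_{n+1}(x) = f(x) - \beta + \min(F_1^n(x), F_2^n(x))$, with $F_1^n(x) = \lambda_k + \sum_y p_1(y|x) V_n(y)$ and $F_2^n(x) = \sum_y p_2(y|x) V_n(y)$, starting from $V_0 \equiv 0$. Since $\lambda_k \leq 0$, the first iterate $V_1(x) = f(x) - \beta + \lambda_k$ is strictly convex by the hypothesis on $f$. Translation-invariance of the birth--death kernels for $x \geq 1$ then implies, by a routine calculation, that each $F_a^n$ is strictly convex in $x$ whenever $V_n$ is.

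The central obstacle is that $\min(F_1^n, F_2^n)$ is a minimum of two strictly convex functions, which is not automatically convex. The saving feature is that $F_1^n(x) - F_2^n(x) = \lambda_k + \mu_k(V_n(x) - V_n(x-1))$ is strictly increasing in $x$, so the minimizer has threshold structure: passive below some $x^\star$, active at or above. On each half-line the minimum inherits strict convexity from the corresponding $F_a^n$. At the threshold, combining the defining inequalities $V_n(x^\star - 1) - V_n(x^\star - 2) \leq \hat{\lambda}_k \leq V_n(x^\star) - V_n(x^\star - 1)$ with a direct computation reduces the second difference of the minimum to the strictly positive combination $\Lambda\, \Delta^2 V_n(x^\star + 1) + (1 - \Lambda - d_2)\, \Delta^2 V_n(x^\star) + d_1\, \Delta^2 V_n(x^\star - 1)$, where $d_a$ denotes the downward rate under action $a$ and each $\Delta^2 V_n$ is strictly positive by the inductive hypothesis. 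Hence strict convexity propagates. Passing to the limit $V_n \to V$ yields convexity of $V$; strictness is then recovered from the DP equation itself, because the same piecewise argument shows $\Delta^2 \min(F_1, F_2)(x) \geq 0$ when $V$ is convex, whence $\Delta^2 V(x) \geq \Delta^2 f(x) > 0$. Finally, the increasing-differences inequality $V(x+z) - V(x) > V(y+z) - V(y)$ for $z \geq 1$ and $x > y$ follows by summing one-step differences.
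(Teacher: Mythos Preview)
Your overall route differs substantially from the paper's, and the core ideas are sound, but there are two real technical gaps you should be aware of.

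\textbf{Comparison with the paper.} The paper does \emph{not} work with discrete second differences or the threshold structure at all. Instead it embeds the state space into $\mathbb{R}^+$, runs $\alpha$-discounted $n$-step value iteration, and proves midpoint convexity $V_n(x_1)+V_n(x_2)\geq 2V_n\!\left(\tfrac{x_1+x_2}{2}\right)$ by a case split on the pair of optimal controls $(u_1,u_2)$ at $x_1,x_2$. It then passes to the infinite-horizon discounted value function by convergence of value iteration, and from there to the average-cost $V$ by a vanishing-discount argument. Strict convexity is obtained by showing $V-f$ is convex, and strict monotonicity is deduced from strict convexity together with Lemma~\ref{lemma_inc_file}. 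Your approach is more elementary in that it stays on the integer lattice and exploits the threshold structure directly; your junction bound $\Lambda\,\Delta^2 V_n(x^\star{+}1)+(1-\Lambda-d_2)\,\Delta^2 V_n(x^\star)+d_1\,\Delta^2 V_n(x^\star{-}1)$ is indeed a valid \emph{lower} bound for $\Delta^2\min(F_1^n,F_2^n)$ at both interface points once one uses both threshold inequalities (the residual term is $\pm\mu_k\bigl[(V_n(\cdot)-V_n(\cdot{-}1))-\hat\lambda_k\bigr]$ with the favorable sign). So the heart of your convexity argument is correct.

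\textbf{Gaps.} First, you invoke ``value iteration $V_n\to V$'' for the \emph{average}-cost problem, with the optimal $\beta$ already inserted. Convergence of such a scheme is not automatic on an unbounded state space; this is precisely why the paper detours through the $\alpha$-discounted problem (where $V_n\to\tilde V_\alpha$ is standard) and then uses the vanishing-discount limit $\bar V_\alpha\to V$ from \cite{Agarwal}. Your induction is fine at finite $n$, but you need a legitimate bridge to $V$; the cleanest fix is to rerun your second-difference argument for the discounted iteration and then let $\alpha\uparrow 1$ as the paper does. Second, your claim ``since $\lambda_k\leq 0$'' is not warranted: for the Whittle analysis $\lambda_k$ is varied over all of $\mathbb{R}$. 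Fortunately this does not matter, since for either sign $V_1(x)=f(x)-\beta+\min(\lambda_k,0)$ is $f$ plus a constant and hence strictly convex; just drop the sign assumption. Finally, neither your junction computation nor your kernel translation-invariance claim covers the boundary at $x=0,1$ (where $p_1,p_2$ differ from the interior); the paper treats $x_2=0$ separately, and you should do likewise for $\Delta^2 V_{n+1}(1)$.
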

\begin{proof}
%The proof is identical to the proof of Lemma 5.2 in \cite{Cloud}. The main idea is to consider three processes $\{ X_t \}, \{ X'_t \}$ and $ \{ X''_t \}$ with initial conditions $x+1, x, x-1$ resp.\ for some $x \geq 1$, on a common probability space with a common arrival and control processes.  The result can be obtained by considering the coupling time of the three processes as in \textit{ibid.}: Any two of these processes with initial conditions differing by $1$ couple only when there is a departure from one and not from the other because of zero queue length in the latter, and evolve together thereafter. Furthermore, $X_t \geq  X'_t \geq X''_t \ \forall t$ and the difference in the first two (resp., the last two) is one till the coupling time, zero thereafter. Additionally,  the coupling time for $\{ X_t \}, \{ X'_t \}$ will dominate that for $\{ X'_t \}, \{ X''_t \}$. Thus for any fixed control process, the difference in discounted cost for the former dominates that for the latter. Rearranging terms, the cost for $\{X'_t\} \leq \frac{1}{2}\times$ the sum of  We omit the details.

The proof follows along similar lines as Lemma 6 in \cite{proc_sharing} and Theorem 4 in \cite{Agarwal}, but with several crucial differences. The argument uses induction. 
We embed the state space to the positive real line, $\mathbb{R}^+$.
Take $x_1, x_2 \in \textbf{S}, x_2 > x_1 > 0$. Let $V_n(\cdot)$ denote the $\alpha-$discounted $n-$step problem 
(For $x < 0$, we define $V_n(x) = V_n(0)$).
Let $u$ be the optimal control for state $x$ at time $n$. We have
\begin{align}
{V}_{n}  (x) &= f(x)  + \alpha V_{n-1} (x+1) \Lambda + \alpha V_{n-1}(x) \Big(1 - \Lambda  \nonumber \\
& \quad - \sum_i \mu_i\Big) +   \alpha V_{n-1}(x-1) \sum_{j\neq k} \mu_j + (1 - u) \lambda_k  \nonumber \\
& \quad \ + \  \alpha (1-u) \mu_k V_{n-1}(x) + \alpha u \mu_k V_{n-1}(x-1). \label{valueiteration1}
\end{align}

\noindent We have that $V_0(x) \equiv f(x),$ which is strictly convex. Assume that $V_{n-1}$ is convex. For $x_1, x_2$ as above, let $u_i, i = 1,2$, be the minimizers for $x = x_i, i = 1,2,$ resp.\ in (\ref{valueiteration1}).  Then
\begin{align}
 {V}_n  & (x_1)  + {V}_n(x_2) = f(x_1) + f(x_2)  \nonumber \\
&+  \alpha V_{n-1}(x_1) (1 - \Lambda - \sum_j \mu_j) + \alpha V_{n-1}(x_2) \bigg(1 - \Lambda - \sum_j \mu_j\bigg) \nonumber \\
& +  \alpha V_{n-1} (x_1+1) \Lambda + \alpha V_{n-1} (x_2+1) \Lambda \nonumber \\
& +  \alpha V_{n-1}(x_1-1) \sum_{j \neq k} \mu_j +  \alpha V_{n-1}(x_2-1) \sum_{j \neq k} \mu_j \nonumber \\
&+ (1 - u_1) \lambda_k + \alpha (1-u_1) \mu_k V_{n-1}(x_1) + \alpha u_1 \mu_k V_{n-1}(x_1-1) \nonumber \\
&+ (1 - u_2) \lambda_k + \alpha (1-u_2)\mu_k V_{n-1}(x_2) + \alpha u_2  \mu_k V_{n-1}(x_2-1). \nonumber
\end{align}
Consider two separate cases depending on the  values of $u_1, u_2$. \\ \vspace{-2mm} \\
\textbf{Case 1}: $u_1 = u_2$
\begin{align}
{V}_n &  (x_1) + {V}_n(x_2) \nonumber \\
&\geq^{*1} 2f\left(\frac{x_1 + x_2}{2}\right)    + 2\alpha V_{n-1}\left(\frac{x_1 + x_2}{2}\right) (1 - \Lambda - \sum_j \mu_j) \nonumber \\
& \quad + 2\alpha V_{n-1} \left(\frac{x_1 + x_2}{2}+1\right) \Lambda +  2\alpha V_{n-1}\left(\frac{x_1 + x_2}{2}-1\right) \sum_{j \neq k} \mu_j \nonumber \\
&\quad + 2\lambda_k\left(1 - \frac{u_1+u_2}{2}\right) + 2 \alpha \left(1 - \frac{u_1+u_2}{2}\right)\mu_k V_{n-1}\left(\frac{x_1 + x_2}{2} \right) \nonumber \\
&\quad + 2 \alpha \left(\frac{u_1+u_2}{2}\right)\mu_k V_{n-1}\left(\frac{x_1 + x_2}{2} -1 \right) \nonumber 
\end{align}
 \vspace{-5.3mm}
\begin{align}
&\geq^{*2} 2f\left(\frac{x_1 + x_2}{2}\right) + 2 \alpha V_{n-1}\left(\frac{x_1 + x_2}{2}\right) (1 - \Lambda - \sum_j \mu_j) \nonumber \\
&\quad + 2 \alpha V_{n-1} \left(\frac{x_1 + x_2}{2}+1\right) \Lambda +  2 \alpha V_{n-1}\left(\frac{x_1 + x_2}{2}-1\right) \sum_{j \neq k} \mu_j \nonumber \\
&\quad + 2\lambda_k(1 - u_3) + 2 \alpha (1 - u_3)\mu_k V_{n-1}\left(\frac{x_1 + x_2}{2} \right) \nonumber \\
&\quad + 2 \alpha u_3\mu_k V_{n-1}\left(\frac{x_1 + x_2}{2} -1 \right) \nonumber \\
&= 2 {V}_n \left(\frac{x_1 + x_2}{2}\right). \nonumber
\end{align}
Here $u_3$ is the optimal control when the state is $\frac{x_1+x_2}{2}$. Inequality $*1$ follows from the convexity of $f(\cdot)$ and $V_{n-1}(\cdot)$. Inequality $*2$ follows from the definition of the optimal control $u_3$. \\ \vspace{-2mm} \\
\textbf{Case 2:} $u_1 \neq u_2$: \\
Consider the case $u_2=0, u_1 = 1$ (The other case is similar)
\begin{align}
{V}_n & (x_1) + {V}_n(x_2)  \nonumber \\
&\geq^{*1} 2f\left(\frac{x_1 + x_2}{2}\right) + 2 \alpha V_{n-1}\left(\frac{x_1 + x_2}{2}\right) (1 - \Lambda - \sum_j \mu_j)  \nonumber \\
&\quad + 2 \alpha V_{n-1} \left(\frac{x_1 + x_2}{2}+1\right) \Lambda + 2 \alpha V_{n-1}\left(\frac{x_1 + x_2}{2}-1\right) \times  \nonumber \\
&\qquad \sum_{j \neq k} \mu_j + \lambda_k + \alpha \mu_k V_{n-1}(x_2) + \alpha \mu_k V_{n-1}(x_1-1) \nonumber
\end{align}
\begin{align}
&= 2f\left(\frac{x_1 + x_2}{2}\right) + 2\lambda_k  + 2 \alpha V_{n-1}\left(\frac{x_1 + x_2}{2}\right) (1 - \Lambda - \sum_i \mu_i) \nonumber \\
&\quad + 2 \alpha V_{n-1} \left(\frac{x_1 + x_2}{2}+1\right) \Lambda +   2 \alpha V_{n-1}\left(\frac{x_1 + x_2}{2}-1\right) \sum_{i \neq k} \mu_i \nonumber \\
&\quad + 2\lambda_k\left(1-\frac{1}{2}\right) + 2\alpha \mu_k \left[ \frac{1}{2}V_{n-1}(x_2) + \frac{1}{2}V_{n-1}(x_1 - 1) \right] \nonumber 
\end{align}
\begin{align}
& \geq^{*2} 2f\left(\frac{x_1 + x_2}{2}\right) + 2\alpha V_{n-1}\left(\frac{x_1 + x_2}{2}\right) (1 - \Lambda - \sum_i \mu_i) \nonumber \\
&\quad + 2 \alpha V_{n-1} \left(\frac{x_1 + x_2}{2}+1\right) \Lambda + 2 \alpha V_{n-1}\left(\frac{x_1 + x_2}{2}-1\right) \sum_{i \neq k} \mu_i \nonumber \\
&\qquad \qquad + 2\lambda_k\left(1 - \frac{1}{2}\right) + 2 \alpha \mu_k \bigg[ \frac{1}{2} V_{n-1}\left(\frac{x_1 + x_2}{2}\right) \nonumber \\
&\qquad \qquad \qquad \qquad + \frac{1}{2} V_{n-1}\left(\frac{x_1 + x_2}{2} -1\right) \bigg] \nonumber 
\end{align}
\begin{align}
&\geq^{*3} 2f\left(\frac{x_1 + x_2}{2}\right) +  2 \alpha V_{n-1}\left(\frac{x_1 + x_2}{2}\right) (1 - \Lambda - \sum_i \mu_i) \nonumber \\
&\quad + 2 \alpha V_{n-1} \left(\frac{x_1 + x_2}{2}+1\right) \Lambda + 2 \alpha V_{n-1}\left(\frac{x_1 + x_2}{2}-1\right) \sum_{i \neq k} \mu_i \nonumber \\
&\qquad \qquad + 2\lambda_k(1 - u_3) + 2 \alpha (1 - u_3) \mu_k V_{n-1}\left(\frac{x_1 + x_2}{2} \right) \nonumber \\
&\quad \qquad \qquad \qquad + 2 \alpha u_3 \mu_k V_{n-1}\left(\frac{x_1 + x_2}{2} -1 \right) \nonumber \\
&= 2 {V}_n \left(\frac{x_1 + x_2}{2}\right). \nonumber
\end{align}
Here $u_3$ is the optimal control when the state has $\frac{x_1+x_2}{2}$ files. Inequalities $*1, *2$ follow from the convexity of $f(\cdot)$ and $V_{n-1}(\cdot)$ (we use the fact that convexity implies non-decreasing differences, i.e., $f(x + a) - f(x) \geq f(y + a) - f(y)$ for $x > y, \ a > 0$). Inequality $*3$ follows from the definition of the optimal control $u_3$.\\
Next consider the case where $x_1 > x_2=0$. We have:
\begin{align}
{V}_n(0) = f(0) + (1-u)\lambda_k + \alpha (1-\Lambda)V_{n-1}(0) + \alpha \Lambda V_{n-1}(1) \nonumber
\end{align}
From this equation, we see that $u=1$ if $\lambda_k > 0$ and $u=0$ otherwise. We rearrange the above equation as 
% (\ref{eq:case_zero}) as:
\begin{align}
{V}_n(0) &= f(0) + (1-u)\lambda_k + \alpha (1-\Lambda - \sum_{i}\mu_i)V_{n-1}(0) \nonumber \\
& \qquad \quad + \alpha \sum_{i \neq k}\mu_iV_{n-1}(0) + \alpha \mu_kV_{n-1}(0) + \alpha \Lambda V_{n-1}(1). \nonumber
\end{align}
We have:
\begin{align}
& {V}_n  (x_1) + {V}_n(0) \nonumber \\
& = f(x_1) + f(0)  + \alpha V_{n-1}(x_1) \left(1 - \Lambda - \sum_i \mu_i\right)  \nonumber \\
& + \alpha V_{n-1}(0) \left(1 - \Lambda - \sum_i \mu_i\right)  + \alpha V_{n-1} (x_1+1) \Lambda +  \alpha V_{n-1} (1) \Lambda \nonumber \\
& \qquad +  \alpha V_{n-1}(x_1-1) \sum_{i \neq k} \mu_i + \alpha V_{n-1}(0) \sum_{i \neq k} \mu_i \nonumber \\
& \qquad \quad + (1 - u_1) \lambda_k + (1-u_1)\alpha \mu_k V_{n-1}(x_1)  \nonumber \\
& \qquad \quad \quad+ \alpha u_1  \mu_k V_{n-1}(x_1-1) + (1 - u_2) \lambda_k + \alpha \mu_k V_{n-1}(0)  \nonumber \\
&\geq^{*1} 2{V}_n\left(\frac{x_1}{2}\right), \nonumber
\end{align}
where ${*1}$ is derived using convexity and by following similar arguments as in the case when $x_2 > 0$

Therefore, by induction, we have that $V_n$ is convex for all $n$. From equation \eqref{valueiteration1}, we see that $V_n$ is the sum of a strictly convex function $f$ and a convex function $V_{n-1}$ when $x \geq 0$. This shows that $V_n$ is in fact a strictly convex function for $x> 0$.   (Note that $V_0 = f$, which is also strictly convex). Letting $\tilde{V}_{\alpha}$ denote the value function of the infinite horizon $\alpha$-discounted problem, we have $V_n \to \tilde{V}_{\alpha}$ pointwise by convergence of the value iteration algorithm. Since $V_n(x) - f(x), \: x \geq 0$ is  convex for all $n$ and convexity is preserved under pointwise convergence,  $\tilde{V}_{\alpha}(x) - f(x), x \geq 0,$ is convex for all $\alpha$. Letting $\bar{V}_{\alpha}(x) := \tilde{V}_{\alpha}(x) - \tilde{V}_{\alpha}(0)$, so will be $\bar{V}_{\alpha} - f$ for all $\alpha$. By the vanishing discount argument of \cite{Agarwal}, $\bar{V}_{\alpha} \to$ the value function $V$ of the average cost problem, pointwise. Thus $V - f$ is convex. Since $f$ is strictly convex, it follows that $V(x), \ x \geq 0,$ is strictly  convex. Strict convexity and non-decreasing property imply strict increase on $[0, \infty)$. Strict convexity also implies  strictly increasing differences. This proves the claim.

\end{proof}

\begin{lemma}
\label{lemma:threshold_policy}
The optimal policy is a threshold policy, i.e.,  $\exists \ x^*$ such that if $x>x^*$, the server serves at full capacity, otherwise the server does not serve this file type.
\end{lemma}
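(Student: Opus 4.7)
The plan is to read the threshold structure directly out of the dynamic programming equation (\ref{eq:Simple_DP}), using the strict convexity of $V$ established in Lemma \ref{lemma:inc_diff}. The minimization in (\ref{eq:Simple_DP}) is between two linear functionals of the future values of $V$, corresponding to the passive choice (server $k$ serves at rate $0$, transition probabilities $p_1$) and the active choice (server $k$ serves at rate $\mu_k$, transition probabilities $p_2$). The two kernels differ only through $\mu_k$: from the definitions in (\ref{eq:def_passive_prob}) and (\ref{eq:def_active_prob}) one has $p_2(x-1|x) - p_1(x-1|x) = \mu_k$ and $p_2(x|x) - p_1(x|x) = -\mu_k$, with identical weights elsewhere. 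Hence
\begin{equation*}
\sum_y p_2(y|x) V(y) - \sum_y p_1(y|x) V(y) \ = \ -\mu_k\bigl(V(x) - V(x-1)\bigr).
\end{equation*}

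First I would use this identity to rewrite the decision rule: the passive action is at least as good as the active action at state $x$ if and only if $\lambda_k \le -\mu_k(V(x) - V(x-1))$, equivalently
\begin{equation*}
\mu_k\bigl(V(x) - V(x-1)\bigr) \ \le \ -\lambda_k.
\end{equation*}
Conversely, the active action is strictly preferred whenever the reverse strict inequality holds. This reduces the claim to showing that the left-hand side is monotone in $x$, so that the set of states where passive is optimal is a downward-closed interval $\{0,1,\dots,x^*\}$ (possibly empty or all of $\mathbf{S}$), while the set where the active action is strictly preferred is $\{x : x > x^*\}$.

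Next I would invoke Lemma \ref{lemma:inc_diff}: since $V$ is strictly convex on $\mathbf{S}$, the forward differences $x \mapsto V(x) - V(x-1)$ are strictly increasing in $x$ for $x \ge 1$. Therefore the map $x \mapsto \mu_k(V(x) - V(x-1))$ is strictly increasing, and the inequality $\mu_k(V(x) - V(x-1)) \le -\lambda_k$ defines precisely a threshold set $\{x : x \le x^*\}$ with
\begin{equation*}
x^* \ := \ \max\bigl\{x \ge 0 \,:\, \mu_k(V(x) - V(x-1)) \le -\lambda_k\bigr\},
\end{equation*}
with the convention $x^* = -1$ (empty passive set) if the inequality fails already at $x = 1$, and $x^* = \infty$ if it always holds. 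Ties at $x = x^*$ can be broken in favour of passivity without loss of optimality because both actions achieve the same value there. This yields the threshold policy stated in the lemma.

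The main obstacle, and the only point requiring care, is the boundary behaviour. At $x = 0$ the passive and active kernels coincide, so the server provides no service irrespective of which action is picked; this fits consistently into the threshold description since $x = 0 \le x^*$ whenever the passive set is non-empty, and if the passive set is empty the policy statement is vacuous at $0$ (the queue is empty, no service is needed). The strict convexity on $\{x \ge 0\}$, rather than merely convexity, is what guarantees \emph{strict} monotonicity of the differences and hence a single, well-defined threshold rather than a non-trivial indifference interval; this is the reason the strict convexity of $f$ was imposed in the model and propagated through Lemma \ref{lemma:inc_diff}.
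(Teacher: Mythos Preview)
Your proof is correct and follows essentially the same route as the paper: both compute the difference $\sum_y p_2(y|x)V(y) - \sum_y p_1(y|x)V(y) = \mu_k(V(x-1)-V(x))$ and then invoke the strict convexity of $V$ from Lemma~\ref{lemma:inc_diff} to conclude that this quantity is strictly monotone in $x$, so the minimizer in (\ref{eq:Simple_DP}) switches exactly once as $x$ varies. Your version is slightly more explicit about the boundary case $x=0$ and the definition of $x^*$, but the underlying argument is identical.
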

\begin{proof}
In order to prove this, we show that the function:
\begin{align}
g(x) &= \sum p_2(y|x)V(y) - \sum p_1(y|x)V(y) \nonumber
\end{align}
is strictly decreasing.
On simplifying this expression, we get:
\begin{align}
g(x) = \mu_k(V(x-1) - V(x))
\end{align}
which is a strictly decreasing function in $x$ by Lemma \ref{lemma:inc_diff}. Thus the minimizer in (\ref{eq:Simple_DP}) changes from one to the other as this quantity crosses $\lambda_k$, while remaining fixed on either side thereof. This implies that the optimal policy is a threshold policy.
\end{proof}

 \textbf{Note: }We have made the assumption that the cost function $f$ is strictly convex. We can relax this assumption to mere convexity and get analogous statements of Lemma \ref{lemma_inc_file} and \ref{lemma:inc_diff}, except that increasing will be replaced by non-decreasing. The only difference it makes is that the choice of threshold, and therefore of our Whittle-like index, may become non-unique over a closed interval wherever the value function has a linear patch. This can be disambiguated by using the convention that we use the smallest candidate value as the index, i.e., the smallest value of the state $x$ for which it is equally desirable to be active or passive. It is easy to see that this is well defined and moreover, facilitates the ordinal comparisons in an unambiguous manner. Note that the scheduling policy depends only on such comparisons.  Thus this does not cause any inconsistency and remains a plausible heuristic, though it is not clear how the performance get affected vis-a-vis the case when such ambiguities do not arise. That it still is a reasonable heuristic is supported by our simulations on a linear cost function reported below. We may add that while strict convexity of the cost function $f$ ensures strict convexity of the value function $V$ as seen above, the latter may turn out to be strict convex even in cases where $f$ is not.

\section{Whittle-like Indexability}
\label{sec:whittle}

We next prove a Whittle-like indexability result in the spirit of \cite{Whittle}. We use the phrase `Whittle-like' because our problem formulation differs from that of \cite{Whittle}, though it builds upon it.

Let $\pi^{\ell}$ denote the stationary probability distribution when the threshold is $\ell$. That is,  if the number of jobs is $\leq {\ell}$, then the server does not transmit, and if number of jobs is $>{\ell}$, then the server transmits at full rate. We have the following lemma.
\begin{lemma}
\label{lemma:inc_stat}
$\sum_{i=0}^{{\ell}}\pi^{\ell}(i)$ is strictly increasing with ${\ell}$.
\end{lemma}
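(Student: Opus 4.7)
The plan is to exploit the fact that, under a threshold policy $\ell$, the controlled chain $X(\cdot)$ is an elementary birth--death chain with birth rate $\Lambda$ at every state, death rate $a := \sum_{j\in S_i,\,j\neq k}\mu_j$ at states $1,\dots,\ell$, and death rate $b := a + \mu_k$ at states $\ell+1, \ell+2,\dots$. By the Stability Assumption $b > \Lambda$, so $\rho_2 := \Lambda/b < 1$ and the chain is positive recurrent for every $\ell$; the ratio $\rho_1 := \Lambda/a$ may be $\le 1$ or $>1$, but this causes no trouble in what follows.

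The classical detailed-balance formula for birth--death chains gives
\begin{align*}
\pi^{\ell}(x) &= \pi^{\ell}(0)\,\rho_1^{x}, \qquad 0 \le x \le \ell, \\
\pi^{\ell}(x) &= \pi^{\ell}(0)\,\rho_1^{\ell}\,\rho_2^{x-\ell}, \qquad x > \ell,
\end{align*}
and normalization determines $\pi^{\ell}(0)$ via
\begin{equation*}
\frac{1}{\pi^{\ell}(0)} \;=\; \sum_{i=0}^{\ell}\rho_1^{i} \;+\; \rho_1^{\ell}\,\frac{\rho_2}{1-\rho_2}.
\end{equation*}
Consequently, writing $S_\ell := \sum_{i=0}^{\ell}\pi^{\ell}(i)$, a direct computation yields
\begin{equation*}
\frac{1}{S_\ell} \;=\; 1 \;+\; \frac{\rho_2}{1-\rho_2}\cdot\frac{\rho_1^{\ell}}{\sum_{i=0}^{\ell}\rho_1^{i}}.
\end{equation*}

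It therefore suffices to show that $T_\ell := \rho_1^{\ell}\big/\sum_{i=0}^{\ell}\rho_1^{i}$ is strictly decreasing in $\ell$. Dividing numerator and denominator by $\rho_1^{\ell}$ and reindexing $j = \ell - i$ gives
\begin{equation*}
T_\ell \;=\; \frac{1}{\sum_{j=0}^{\ell}\rho_1^{-j}},
\end{equation*}
whose denominator is a sum of strictly positive terms and hence strictly increases with $\ell$ (for any value of $\rho_1 > 0$, including $\rho_1 = 1$, in which case $T_\ell = 1/(\ell+1)$). Thus $T_\ell$ strictly decreases, $1/S_\ell$ strictly decreases, and $S_\ell$ strictly increases in $\ell$, which is the claim.

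The argument is entirely elementary once the stationary distribution is written down; the only conceptual point is that the two-regime birth--death structure of the threshold policy lets us factor the stationary mass above $\ell$ as a geometric tail whose weight relative to the mass below $\ell$ is exactly the ratio $T_\ell$, so monotonicity of a single closed-form expression does all the work. No serious obstacle is anticipated.
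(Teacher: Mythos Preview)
Your proof is correct and follows essentially the same approach as the paper: both identify the birth--death structure of the threshold chain, write down the detailed-balance stationary distribution, and reduce the claim to monotonicity of the closed-form ratio. Your version is in fact a bit more careful than the paper's, since you explicitly handle the case $\rho_1 = 1$ and supply the clean argument $T_\ell = 1/\sum_{j=0}^{\ell}\rho_1^{-j}$ for strict monotonicity, whereas the paper simply asserts that its geometric-sum expression is strictly increasing.
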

\begin{proof}
Let $\hat{\mu} = \sum_{j \in S_i; j \neq k}\mu_j$. The Markov chain formed with a threshold of ${\ell}$ is shown in  Figure \ref{fig:Markov_Model}. This is a time reversible Markov chain with stationary probabilities given by:
\begin{align}
\pi^{\ell}(i) &= \pi^{\ell}(0) \bigg( \frac{\Lambda}{\hat{\mu}} \bigg)^i \qquad \qquad \qquad \qquad \text{if } i \leq {\ell}, \nonumber \\
\pi^{\ell}(i) &= \pi^{\ell}(0) \bigg( \frac{\Lambda}{\hat{\mu}} \bigg)^{\ell} \bigg(\frac{\Lambda}{\mu_k + \hat{\mu}} \bigg)^{i-{\ell}} \qquad \quad \text{if } i > {\ell}, \nonumber
\end{align}
where $\pi^{\ell}(0)$ is the stationary probability of state $0$.

\begin{figure}[H]
\begin{center}
\includegraphics[angle=0,scale=0.4]{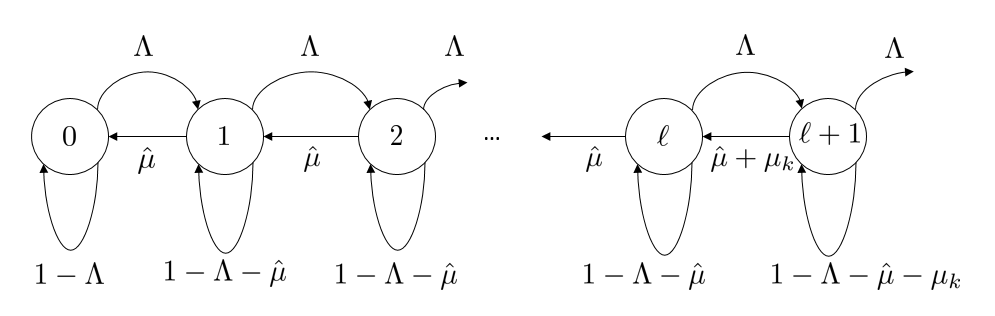}
\caption{Markov Chain}
\label{fig:Markov_Model}
\end{center}
\end{figure}

\noindent From this, we see that
\begin{align}
\sum_{i=0}^{{\ell}}\pi^{\ell}(i) = \frac{\frac{(\frac{\Lambda}{\hat{\mu}})^{\ell} - 1}{(\frac{\Lambda}{\hat{\mu}}) - 1}}{\frac{(\frac{\Lambda}{\hat{\mu}})^{\ell} - 1}{(\frac{\Lambda}{\hat{\mu}}) - 1} + (\frac{\Lambda}{\hat{\mu}})^{\ell} (\frac{\hat{\mu} + \mu_k}{\hat{\mu} + \mu_k - \Lambda} )} \nonumber
\end{align}
which is a strictly increasing function of ${\ell}$.
\end{proof}

\begin{theorem}
This problem is Whittle-like indexable in the sense that the set of passive states decreases monotonically from the whole state space to the empty set $\phi$ as $\lambda\uparrow\infty$.
\end{theorem}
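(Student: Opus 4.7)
My plan is to parametrize the problem by the threshold supplied by Lemma~\ref{lemma:threshold_policy} and then show that the optimal threshold $\ell(\lambda)$, viewed as a function of the subsidy $\lambda = \lambda_k$, is a non-increasing integer-valued function that sweeps out the entire range $\{\infty, \ldots, 2, 1, 0, -1\}$ as $\lambda$ runs from $-\infty$ to $+\infty$ (with the convention that $\ell = -1$ encodes the empty passive set and $\ell = \infty$ the full state space). Whittle-like indexability is precisely this monotonicity together with attainment of both extremes in the limit.

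Since the decoupled per-stage cost reads $f(x) + \lambda\,\mathbf{1}\{x \leq \ell\}$ under the threshold policy with threshold $\ell$, the long-run average cost becomes
\[
 J(\ell, \lambda) \; = \; F(\ell) + \lambda \, G(\ell), \qquad F(\ell) := \sum_x \pi^{\ell}(x) f(x), \quad G(\ell) := \sum_{x \leq \ell} \pi^{\ell}(x),
\]
where $\pi^{\ell}$ is the explicit stationary distribution from Lemma~\ref{lemma:inc_stat}. The structural facts I need are: (i) $F(\ell) < \infty$ for every finite $\ell$, which follows from the Stability Assumption together with the geometric tails of $\pi^{\ell}$; and (ii) $G(\ell)$ is \emph{strictly} increasing in $\ell$, which is exactly the content of Lemma~\ref{lemma:inc_stat}. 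Monotonicity of $\ell(\cdot)$ then falls out of the standard interchange argument: for $\lambda_1 < \lambda_2$ with optima $\ell_i = \ell(\lambda_i)$, writing down the two optimality inequalities $J(\ell_i, \lambda_i) \leq J(\ell_{3-i}, \lambda_i)$, adding them, and cancelling the $F$-terms produces $(\lambda_2 - \lambda_1)\bigl(G(\ell_2) - G(\ell_1)\bigr) \leq 0$, so $G(\ell_2) \leq G(\ell_1)$, and strict monotonicity of $G$ forces $\ell_2 \leq \ell_1$.

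It remains to pin down the endpoints. Being monotone and integer-valued, $\ell(\lambda)$ admits limits $\ell_{\pm\infty} := \lim_{\lambda \to \pm\infty} \ell(\lambda)$. If $\ell_{-\infty}$ were finite, then for $\lambda$ sufficiently negative, optimality of $\ell_{-\infty}$ against $\ell_{-\infty}+1$ would demand $F(\ell_{-\infty}) - F(\ell_{-\infty}+1) \leq \lambda\bigl(G(\ell_{-\infty}+1) - G(\ell_{-\infty})\bigr)$; the right-hand side tends to $-\infty$ (its coefficient is strictly positive by Lemma~\ref{lemma:inc_stat}) while the left-hand side is a fixed finite constant, a contradiction, so $\ell_{-\infty} = \infty$. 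A symmetric comparison at the opposite end forces $\ell_{+\infty} = -1$, i.e.\ the passive set collapses to $\phi$. The main obstacle I anticipate is precisely this endpoint step, compounded by the slightly fiddly bookkeeping around state $0$: the active and passive actions induce identical transition dynamics there, so the $\ell = -1$ versus $\ell = 0$ distinction is purely about whether the one-off cost $\lambda \pi^{\ell}(0)$ is charged; verifying that this is captured correctly by the formula $J(\ell, \lambda) = F(\ell) + \lambda G(\ell)$ and by the $\ell = -1$ convention is where the proof will need the most attention.
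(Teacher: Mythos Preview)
Your argument is correct and rests on the same two ingredients as the paper's proof: the affine dependence of the average cost on $\lambda$ once the policy is parametrized by a threshold, and the strict monotonicity of $G(\ell)=\sum_{i\le \ell}\pi^\ell(i)$ from Lemma~\ref{lemma:inc_stat}. Where you and the paper part ways is in how monotonicity of the optimal threshold is extracted. The paper observes that $\beta(\lambda)=\min_\ell J(\ell,\lambda)$ is concave, invokes the envelope theorem to identify its derivative with $G(x(\lambda))$, and then reads off ``$G(x(\lambda))$ non-increasing $\Rightarrow$ $x(\lambda)$ non-increasing'' from strict monotonicity of $G$. You instead run the classical interchange (supermodularity) argument directly on the pair of optimality inequalities, which is the discrete, differentiation-free version of the same idea. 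Your route is a bit more elementary in that it sidesteps any questions about differentiability of $\beta$ or uniqueness of the minimizer that the envelope theorem tacitly carries; on the other hand, the paper's phrasing makes the concavity of $\beta(\cdot)$ explicit, which is a useful structural takeaway in its own right. You also carry out the endpoint analysis ($\ell(\lambda)\to\infty$ as $\lambda\downarrow -\infty$ and $\ell(\lambda)\to -1$ as $\lambda\uparrow\infty$) in detail, whereas the paper states the conclusion without spelling this out; your handling of the $\ell=-1$ versus $\ell=0$ boundary, noting that $\pi^{-1}=\pi^{0}$ so the two differ only by the term $\lambda\,\pi^0(0)$, is exactly the right observation there.
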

\begin{proof}
%Let $\lambda_1 > \lambda_2$ be the Whittle index for the thresholds $x_1$ and $x_2$ resp. From the definition of the Whittle index we have:
%\begin{align}
%\lambda_1 = \mu_k (V_{\lambda_1}(x_1 - 1) - V_{\lambda_1}(x_1))
%\end{align}

The proof is along the lines of Theorem 1 in \cite{proc_sharing}. It has been reproduced for sake of completeness.

The optimal average cost of the problem is given by
$$\beta(\lambda) = \inf \: \{ \sum_k f(k) \pi(k) + \lambda\sum_{k \in B}\pi(k) \}$$
where $\pi$ is the stationary distribution and $B$ is the set of passive states. The infimum $\beta(\lambda)$ of this quantity affine in $\lambda$ is over all admissible policies, which by Lemma \ref{lemma:threshold_policy} is the same as the infimum over all threshold policies.  Hence $\beta(\cdot)$  is concave non-decreasing with slope $<1$. 
%By Lemma \ref{lemma:threshold_policy}, we can write this as:
%\begin{align*}
%\beta(\lambda) &= \sum_{k=0}^{\infty} f(k) \pi^{x(\lambda)}(k) + \lambda\sum_{k=0}^{x(\lambda)}\pi^{x(\lambda)}(k)
%\end{align*}
By the envelope theorem (Theorem 1, \cite{Milgrom}), the derivative of this function with respect to $\lambda$ is given by
$$\sum_{k=0}^{x(\lambda)}\pi^{x(\lambda)}(k)$$
where $x(\lambda)$ is the optimal threshold under $\lambda$. 
Since $\beta (\lambda)$ is a concave function, its derivative has to be a non-increasing function of $\lambda$, i.e.,
$$\sum_{k=0}^{x(\lambda)}\pi^{x(\lambda)}(k) \: \: \: \text{is non-increasing with} \: \lambda.$$
But, from Lemma \ref{lemma:inc_stat}, we know that $\sum_{j=0}^{{\ell}}\pi^{{\ell}}(j)$ is a strictly increasing function of ${\ell}$, where ${\ell}$ is the threshold.
Then $x(\lambda)$ must be a strictly decreasing function of $\lambda$. The set of passive states for $\lambda$ is given by
$ [0, x(\lambda)]$. It follows 
 that the set of passive states monotonically decreases to $\phi$ as $\lambda \uparrow \infty$. This implies Whittle-like indexability.

\end{proof}

\subsection{Proposed Policy}

We propose the following heuristic policy inspired by \cite{Whittle}:\\

\textit{Our decision epochs are the time instances when there is some change in the system, i.e., either an arrival or a departure occurs.
For each server $j \in S$, the index computed for each file type connected to this server is known. The file type which has the smallest index is chosen and the server serves it at full rate. Each time there is either an arrival into a file type or there is a job completion, the new indices are sent to the server which then decides which queue to serve.}

\subsection{Computation of the Whittle-like index}

The Whittle-like index $\lambda(x)$ (when the number of jobs is $x$) is computed by the following linear system of equations and an iterative scheme for $\lambda_n$ which uses the solution of the linear system as a subroutine at each step. We have used $V_{\lambda}(\cdot), \beta(\lambda)$ in place of $V(\cdot), \beta$ to make the $\lambda$-dependence of $V, \beta$ explicit as required by this part of analysis.
\begin{align}
V_{\lambda_n}(y) &= f(y) + \lambda_n(x) + \mathbb{E}_{p_1}[ V_{\lambda_n}(y)] - \beta(\lambda_n) \: \:  \text{if } y \leq x, \label{eq:Value_Change_0} \\
V_{\lambda_n}(y) &= f(y) +  \mathbb{E}_{p_2}[V_{\lambda_n}(y)] - \beta(\lambda_n) \: \qquad   \: \: \quad \text{if } y > x,
\label{eq:Value_Change_1} \\
V_{\lambda_n}(0) &= 0 \label{eq:Value_Change2} \\
\lambda_{n+1}(x) &= \lambda_n + \eta(\mathbb{E}_{p_2}[V_{\lambda_n}(x)] - \mathbb{E}_{p_1}[ V_{\lambda_n}(x)] - \lambda_n(x)) \label{eq:Lambda_Change_1}
\end{align}
Here $\eta$ is a small step size (taken to be $0.01$). $\mathbb{E}_{p_i} [ \  \cdot \ ]$ denotes expectation with respect to the probability distribution $p_i, i = 1,2$ (as defined in equations (\ref{eq:def_passive_prob}) and (\ref{eq:def_active_prob})).

We analyze this scheme under the simplifying assumption that $f$ is strictly convex. The proof of convexity of $V$ shows that $V$ will also be strictly convex, hence $x \mapsto V(x + z) - V(x)$ for $z > 0$ strictly increasing. In particular, the argument of Lemma \ref{lemma:threshold_policy} then shows that the Whittle-like index is uniquely defined for each $x$.

\begin{theorem} For each fixed $x$, $\lambda_n(x)$ converges to an $O(\eta)$ neighborhood of the Whittle-like index as $n\uparrow\infty$.
\end{theorem}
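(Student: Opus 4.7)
Interpret the recursion \eqref{eq:Lambda_Change_1} as a deterministic Euler discretization, with step size $\eta$, of the ordinary differential equation
\[
\dot\lambda(t) \ = \ G(\lambda(t)) - \lambda(t), \qquad G(\lambda) \ := \ \mathbb{E}_{p_2}[V_\lambda(x)] - \mathbb{E}_{p_1}[V_\lambda(x)],
\]
where, for each fixed $\lambda$, $V_\lambda$ is the unique solution of the linear Poisson system \eqref{eq:Value_Change_0}--\eqref{eq:Value_Change2}. The plan is to (i) identify the unique equilibrium of this ODE with the Whittle-like index $\lambda^{*}(x)$, (ii) establish the global asymptotic stability of that equilibrium, and (iii) invoke a standard Euler-tracking result to conclude that $\lambda_n$ enters and stays in an $O(\eta)$ neighborhood of $\lambda^{*}(x)$.

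For (i), observe that \eqref{eq:Value_Change_0}--\eqref{eq:Value_Change2} is a linear system whose right-hand side depends affinely on $\lambda$; hence $V_\lambda = V_0 + \lambda W$, where $V_0$ and $W$ solve the Poisson equation under the fixed threshold-$x$ policy for the costs $f$ and $\mathbf{1}_{\{y \leq x\}}$ respectively, both normalized by value $0$ at state $0$. A direct computation from \eqref{eq:def_passive_prob}--\eqref{eq:def_active_prob} gives $\mathbb{E}_{p_2}[h(x)] - \mathbb{E}_{p_1}[h(x)] = \mu_k(h(x-1) - h(x))$ for any $h$, so $G(\lambda) = G_0 + g\lambda$ is affine in $\lambda$ with $g = \mu_k(W(x-1) - W(x))$. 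The equation $G(\lambda) = \lambda$ expresses that the active and passive Q-values coincide at state $x$ under the threshold-$x$ policy with subsidy $\lambda$, which is precisely the defining property of $\lambda^{*}(x)$.

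The main analytic step, and the principal obstacle I foresee, is (ii): proving $g < 1$, so that $G(\lambda) - \lambda = (g-1)(\lambda - \lambda^{*}(x))$ is a contractive vector field. A convenient pathwise formula is
\[
g \ = \ \mu_k (1 - \bar p)\,\mathbb{E}_{x-1}[\tau_x],
\]
where $\bar p := \sum_{y \leq x} \pi^x(y)$ and $\tau_x$ is the hitting time of state $x$; this follows from the Poisson representation $W(x-1) - W(x) = \mathbb{E}_{x-1}\bigl[\sum_{n < \tau_x}(\mathbf{1}_{\{X_n \leq x\}} - \bar p)\bigr]$ together with the fact that from $x-1$ the chain stays in $\{0, 1, \ldots, x-1\}$ until $\tau_x$, on which the indicator is identically $1$. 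The inequality $g < 1$ can then be shown either by a direct calculation using the closed-form expression for $\pi^x$ in Lemma \ref{lemma:inc_stat} together with the standard first-passage recursion $T_k = 1/\Lambda + (\hat\mu/\Lambda) T_{k-1}$, or abstractly by policy improvement: Whittle-like indexability forces state $x$ to be active under the optimal policy for $\lambda$ slightly above $\lambda^{*}(x)$, while the threshold-$x$ policy renders $x$ passive, so the one-step improvement of the threshold-$x$ policy at $x$ must strictly reduce cost, which translates into $G(\lambda) - \lambda < 0$ for such $\lambda$; combined with $G(\lambda^{*}) = \lambda^{*}$ and affinity of $G$, this forces $g - 1 < 0$.

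Once $g < 1$ is established, the Lyapunov function $L(\lambda) := (\lambda - \lambda^{*}(x))^{2}$ satisfies $\dot L = -2(1-g) L$ along the ODE, so $\lambda^{*}(x)$ is globally exponentially stable. Step (iii) then follows from a standard Euler-tracking argument: for $\eta$ small enough, the iterates stay within $O(\eta)$ of the continuous trajectory on every bounded time interval, and are therefore eventually confined to an $O(\eta)$ neighborhood of $\lambda^{*}(x)$, as claimed.
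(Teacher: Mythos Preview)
Your proposal follows essentially the same route as the paper's proof: interpret \eqref{eq:Lambda_Change_1} as an Euler discretization of the scalar linear ODE $\dot\lambda = F(\lambda) - \lambda$ with $F(\lambda) = \mathbb{E}_{p_2}[V_\lambda(x)] - \mathbb{E}_{p_1}[V_\lambda(x)]$, use linearity of \eqref{eq:Value_Change_0}--\eqref{eq:Value_Change2} in $\lambda$ to conclude the ODE is affine, identify its unique equilibrium with the Whittle-like index, argue stability, and finish with a standard Euler-tracking bound. The only substantive difference is that you work out the stability step in detail---deriving the explicit slope $g = \mu_k(1-\bar p)\,\mathbb{E}_{x-1}[\tau_x]$ and sketching how to verify $g<1$---whereas the paper simply asserts that the drift is negative above the equilibrium and positive below; your pathwise formula is correct and the direct calculation you outline (using the closed-form $\pi^x$ of Lemma~\ref{lemma:inc_stat} and the first-passage recursion) does indeed yield $g<1$ for every finite $x$.
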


\begin{proof} Since $\eta$ is small, we can view (\ref{eq:Lambda_Change_1}) as an Euler scheme for approximate solution by discretization \cite{Butcher} of the ODE
\begin{align}
\dot{\lambda}(t) = F(\lambda(t)) - \lambda(t) \nonumber
\end{align}
where
$$F(\lambda) := \mathbb{E}_{p_2}[V_{\lambda}(x)] - \mathbb{E}_{p_1}[ V_{\lambda}(x)]. $$
 %The expression (\ref{V-char}) exhibits $V$ as a pointwise minimum of a family of affine functions of $\lambda$, hence the map $\lambda \mapsto V_{\lambda}(x)$ is concave. Then $F(\cdot)$ is the difference of two such concave functions and is therefore locally Lipschitz. This ensures local existence and uniqueness of solutions.
 Equations (\ref{eq:Value_Change_0}) - (\ref{eq:Value_Change2}) constitute a linear system of equations, hence $V_{\lambda}(x), \beta(\lambda)$ are linear in $\lambda$. Thus the above ODE is well-posed.
Furthermore, this is a scalar ODE  with  equilibrium given by that value of $\lambda$ for which
\begin{align}
\lambda = \mathbb{E}_{p_2}[V_{\lambda}(x)] - \mathbb{E}_{p_1}[ V_{\lambda}(x)], \nonumber
\end{align}
i.e., the Whittle-like index at state $x$, unique as observed above.  Above this value, the ODE has a negative drift and below it, a positive drift. Thus it is a stable ODE (i.e., the trajectories do not blow up).  As  a stable linear ODE, it converges to its equilibrium.
 Interpolate the iterates as $\bar{\lambda}(t) = \lambda(n)$ for $t = n\eta$ with linear interpolation on $[n\eta, (n+1)\eta]$ $\forall n$. Define $[t] := \sup\{n\eta : n\eta \leq t < (n+1)\eta\}$. Then we have
 $$\dot{\bar{\lambda}}(t) = F(\bar{\lambda}(t)) - \bar{\lambda}(t) +  \upsilon(t) \quad  \mbox{a.e.},$$
 where $\upsilon(t) := F(\bar{\lambda}([t])) - F(\bar{\lambda}(t))$.
It is easy to check that given the boundedness of trajectories and linearity of $F$, $|\upsilon|$ is $O(\eta)$. The convergence of iteration (\ref{eq:Lambda_Change_1}) to a neighborhood of this equilibrium then follows from Theorem 1 of \cite{Hirsch} by standard arguments. In fact, given that this is a linear system with input, the classical variation of constants formula can be used for the purpose as well. (See \cite{Butcher} for a detailed error analysis of Euler method in a much more general set-up.)
\end{proof}

Note that we have not imposed any restriction on the sign of $\lambda(x)$ though it is known a priori, because the stable dynamics above with a unique equilibrium automatically picks up the right $\lambda(x)$. The linear system (\ref{eq:Value_Change_0}), (\ref{eq:Value_Change_1}) and
(\ref{eq:Value_Change2}) is solved as a subroutine by a suitable linear system solver.

\section{Simulations}
\label{sec:simulations}

In this section, we report simulations to compare the performance of the proposed Whittle-like index policy with other natural heuristic policies given as follows:

 \begin{itemize}
\item \textbf{Balanced Fair Allocation: } This is a centralized scheme for allocating server capacities. See \cite{BF} for more details.
\item \textbf{Uniform Allocation: } At each instant in time, each server splits its rate equally among all the files that it contains.
\item \textbf{Weighted Allocation: }The server rates are split according to prescribed weights proportional to the arrival rates into the different file types.
\item \textbf{Random Allocation: } The decision epochs are the same. At each instant, for each server, a file type is chosen randomly and the server serves this at full capacity.
\item \textbf{Max-Weight Allocation: } Each server serves at full capacity that file type which has the most number of jobs at any given instant.
\end{itemize}

We first compare the Whittle-like policy with the Optimal scheme and the balanced fairness allocation scheme for the network shown in Figure \ref{fig:Network_Sim_Opt}. The results of the simulations are shown in Figure \ref{fig:Sim_Opt}. The values of parameters are as follows:
$\Lambda^1 = 0.2 ;  f^1(x) = 13x ; \Lambda^2 = 0.1 ; f^2(x) = 10x ; \mu_1 = 0.2 ; \mu_2 = 0.2$.
The optimal value for the network in figure \ref{fig:Network_Sim_Opt} is computed using the following set of iterative equations:
\begin{align}
V_{n+1}(x_1, x_2) &= f^1(x_1) + f^2(x_2) - V_n(0,0) \nonumber \\
&+ \min_{i \in \{ 1,2,3,4 \} } \big( \mathbb{E}^i[V_n(\cdot) | x_1,x_2] \big) \nonumber \\
u_{n+1} &= \underset{i}{\operatorname{argmin}} \big( \mathbb{E}^i[V_n(\cdot) | x_1,x_2] \big) \nonumber
\end{align}
Here, the control $u$ denotes the following: $u=1$ denotes server $1,2$ serve file $1$; $u=2$ denotes server $1$ serves file $1$ and server $2$ serves file $2$; $u=3$ denotes server $1$ serves file $2$ and server $2$ serves file $1$; $u=4$ denotes server $1,2$ serve file $2$. $\mathbb{E}^i[ \ \cdot \ ]$ denotes expectation with respect to the probability distribution under control $i$.

\begin{figure}[h!]
\begin{center}
\includegraphics[angle=0,scale=0.6]{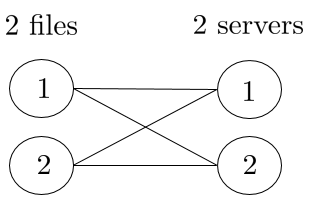}
\caption{The Network that we use for simulations}
\label{fig:Network_Sim_Opt}
\end{center}
\end{figure}

\begin{figure}[h!]
\begin{center}
\includegraphics[angle=0,scale=0.5]{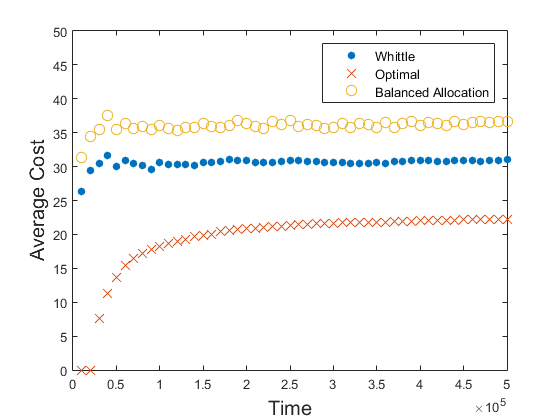}
\vspace{-3mm}
\caption{Comparison of Whittle-like policy, True Optimal, and Balanced Fairness scheme}
\label{fig:Sim_Opt}
\end{center}
\end{figure}

The second network that we  consider is shown in Figure \ref{fig:Network_Sim}. The  parameters in this simulation are as follows:
$\Lambda^1 = 0.1 ; f^1(x) = 10x ; \Lambda^2 = 0.2 ; f^2(x) = 20x ; \Lambda^3 = 0.1 ; f^3(x) = 10x ; \mu_1 = 0.2 ; \mu_2 = 0.3$.

\begin{figure}[h!]
\begin{center}
\includegraphics[angle=0,scale=0.6]{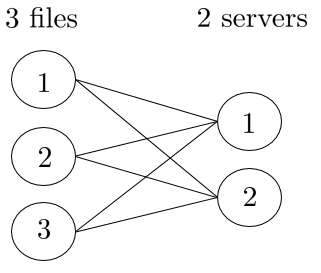}
\vspace{-3mm}
\caption{The Network that we use for simulations}
\label{fig:Network_Sim}
\end{center}
\end{figure}

Figure \ref{fig:Whittle_Files} shows the Whittle-like indices assigned by file types $1$ and $2$ to server $1$ and Figure \ref{fig:Whittle_Servers} shows the Whittle-like indices assigned by file type 1 to the two servers.

\begin{figure}[H]
\begin{center}
\includegraphics[angle=0,scale=0.5]{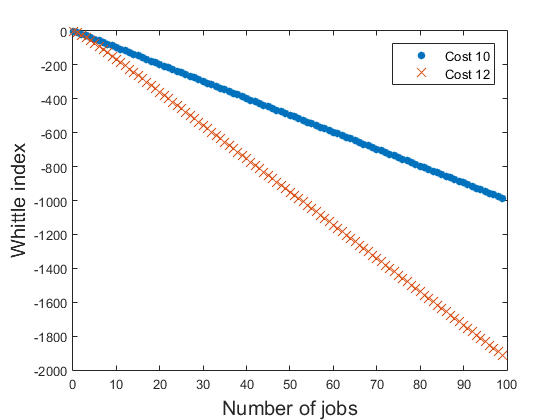}
\vspace{-3mm}
\caption{Whittle-like index assigned by different files to the same server}
\label{fig:Whittle_Files}
\end{center}
\end{figure}

\begin{figure}[h!]
\begin{center}
\includegraphics[angle=0,scale=0.5]{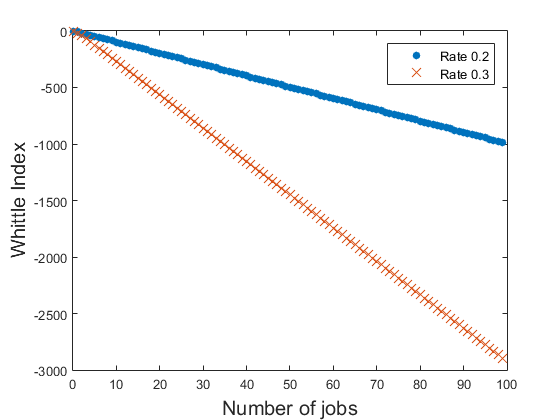}
\vspace{-3mm}
\caption{Whittle-like index assigned by the same file to different servers}
\label{fig:Whittle_Servers}
\end{center}
\end{figure}

Figures \ref{fig:Sim_Bad} and \ref{fig:Sim_Good} compare performance of the various methods that were described earlier in this section\footnote{We have separated these figures for better comparison. This is because the performance of the uniform and random allocation is much worse than the other policies.}. We can see that the Whittle-like index based policy performs better than the other methods of server allocation.

\begin{figure}[h!]
\begin{center}
\includegraphics[angle=0,scale=0.5]{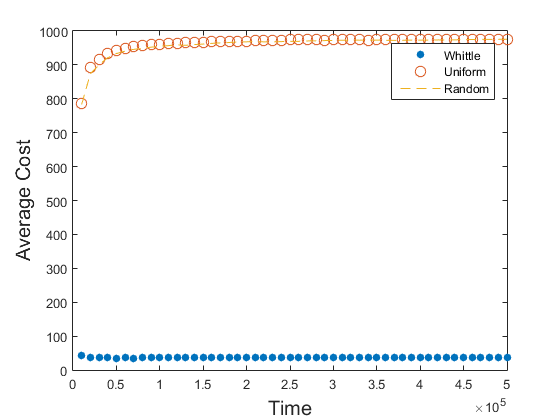}
\vspace{-3mm}
\caption{Comparison of Whittle-like policy with Uniform and Random policies }
\label{fig:Sim_Bad}
\end{center}
\end{figure}

\begin{figure}[h!]
\begin{center}
\includegraphics[angle=0,scale=0.5]{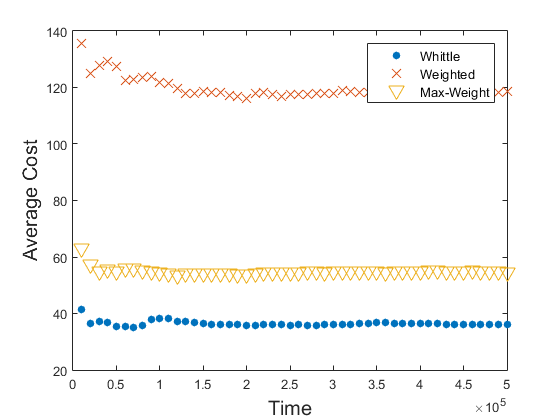}
\vspace{-3mm}
\caption{Comparison of Whittle-like policy with Weighted and Max Weight policies }
\label{fig:Sim_Good}
\end{center}
\end{figure}
%We also have simulation results for the model shown in Figure \ref{fig:Network_Sim_1}. The parameters are:
%$\Lambda^1 = 0.2 ; f^1(x) = 15x ; \Lambda^2 = 0.3 ; f^2(x) = 20x ; \Lambda^3 = 0.1 ; f^3(x) = 10x ; \mu_1 = 0.2 ; \mu_2 = 0.3 ; \mu_3 = 0.2$.
%\begin{figure}[h!]
%\begin{center}
%\includegraphics[angle=0,scale=0.6]{Simulation_Network_1.png}
%\caption{The network  used for simulations}
%\label{fig:Network_Sim_1}
%\end{center}
%\end{figure}
%
%
%\begin{figure}[h!]
%\begin{center}
%\includegraphics[angle=0,scale=0.6]{Comparison_Bad_1.png}
%\caption{Comparison of Whittle policy with Uniform and Random policies }
%\label{fig:Sim_Bad_1}
%\end{center}
%\end{figure}
%
%\begin{figure}[h!]
%\begin{center}
%\includegraphics[angle=0,scale=0.6]{Comparison_Good_1.png}
%\caption{Comparison of Whittle policy with Weighted and Max Weight policies }
%\label{fig:Sim_Good_1}
%\end{center}
%\end{figure}
Figure \ref{fig:Sim_Good_1} shows simulation results for the model with 10 file types and 10 servers such that file type $i$ is stored in servers $i, i+1 (\text{mod} 10)$. $\Lambda^i = 0.2, f^i(x) = 15x, \mu_i = 0.2$ for $i = 1, 4, 7, 10$.  $\Lambda^i = 0.3, f^i(x) = 20x, \mu_i = 0.3$ for $i = 2, 5, 8$. $\Lambda^i = 0.1, f^i(x) = 10x, \mu_i = 0.2$ for $i = 3, 6, 9$. Again, the Whittle-like policy shows a clear advantage.

\begin{figure}[h!]
\begin{center}
\includegraphics[angle=0,scale=0.5]{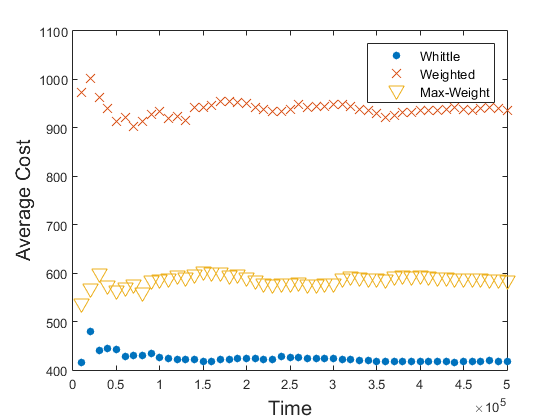}
\vspace{-3mm}
\caption{Comparison of Whittle-like policy with Weighted and Max Weight policies ($10$ file types and $10$ servers)}
\label{fig:Sim_Good_1}
\end{center}
\end{figure}
\section{Conclusions and Future Work}
\label{sec:conc}
\noindent We have proved Whittle-like indexability of the server allocation problem in resource pooling networks. The allocation of servers using the Whittle-like scheme can be implemented in a distributed manner. The next step would be to extend this work to more general file types and possibly more complicated network topologies.

\end{document}